\tikzset{
    shapenode/.style = {draw, rectangle, fill=none, minimum size=0.6cm, minimum height=0.7cm, minimum width=0.7cm, 
    auto, node distance=5em, font=\normalsize}, 
    textnode/.style  = {draw=none, fill=none, rectangle, minimum size=0.6cm, auto, node distance=4.5em, font=\normalsize},
    connect/.style = {black,->}
}
\begin{document}

\title{Threshold Trapdoor Functions and Their Applications}

\author{Binbin Tu\inst{1,2,3,4} \and
	Yu Chen\inst{2,3,4,5} \and
	Xueli Wang\inst{2,3,4}}

\institute{Westone Cryptologic Research Center, Westone Information Industry Inc, Beijing 100070, China \and
	State Key Laboratory of Cryptology, P.O. Box 5159, Beijing 100878, China \and
	State Key Laboratory of Information Security, Institute of Information Engineering, Chinese Academy of Sciences, Beijing, China \and 
	School of Cyber Security, University of Chinese Academy of Sciences, Beijing, China \and
	Ant Financial, Beijing, China\\
	\email{tubinbin,chenyu,wangxueli@iie.ac.cn}}

\maketitle

\begin{abstract}
We introduce a new cryptographic primitive named threshold trapdoor functions (TTDFs), 
from which we give generic constructions of threshold and revocation encryptions under 
adaptive corruption model.  
Then, we show TTDFs can be instantiated under  
the decisional Diffie-Hellman (DDH) assumption and the learning with errors (LWE) assumption.
By combining the instantiations of TTDFs with the generic constructions,
we obtain threshold and revocation encryptions
which compare favorably over existing schemes.
The experimental results show that our proposed schemes are practical.
\end{abstract}

\section{Introduction}\label{sec1}

\begin{trivlist}
\item\textbf{Threshold public-key encryption.} TPKE ~\cite{Des87,DDFY94,SG98,CG99}
can distribute the decryption power among many servers in order to ensure threshold 
servers can decrypt ciphertexts,
while any probabilistic polynomial-time (PPT) adversary
corrupting less than threshold servers is unable to obtain the message.
TPKE itself provides useful functionalities, and 
it is also a significant building block in other cryptographic primitives, 
such as mix-net (anonymous channel)~\cite{Cha81}, public key encryption with
non-interactive opening~\cite{DHKT08,GLFFLMS10}.Generally speaking, a $(n,t)$-TPKE scheme consists of a combiner and $n$ decryption servers. The combiner sends the ciphertext to all servers, any subset of $t$ servers compute the decryption shares and reply, and the combiner combines the replies to obtain the plaintext. However, in the securty model of TPKE, not only the servers may be corrupted, but also the decryption shares could be eavesdropped. Therefore, constructing TPKE schemes by splitting the secret key of public-key encrytion (PKE) directly does not work. Following the generic construction of PKE from trapdoor function (TDF), we try to design a threshold version of TDF for constructing the TPKE scheme, by splitting the master trapdoor into $n$ shares and storing each share on a different server and any subset of $t$ servers can use the shared trapdoors to invert the function, without reconstructing the master trapdoor.

\item\textbf{Revocation public-key encryption.}
RPKE~\cite{NP00,DF03,Wee11} enables a sender to broadcast ciphertexts
and all but some revoked users can
do the decryption. It is a special kind of broadcast encryption~\cite{FN93}
which enables a sender to encrypt messages and transmit ciphertexts to
users on a broadcast channel in order to the chosen users can decrypt ciphertexts.
RPKE has many applications, including
pay-TV systems, streaming audio/video and many others.

Naor and Pinkas~\cite{NP00} considered the revocation scenario: 
a group controller (GC) controls the decryption
capabilities of users. If a subgroup of users is disallowed
to do the decryption, the GC needs to generate a new key which should be known to other users
and be used to encrypt in the further group communication. 
We observe that the threshold version of TDF remains one-way, even if exposing part of shared trapdoors. Therefore, we can revoke this part of shared trapdoors and ensure that any revoked users cannot decrypt the ciphertext.
\end{trivlist}

\subsection{Related Work}
Designing generic construction of TPKE has proved to be a highly 
non-trivial task. Dodis and Katz~\cite{DK05} 
gave a generic construction
of TPKE from multiple encryption technique.
Wee~\cite{Wee11} introduced a new primitive called 
threshold extractable hash proofs and presented 
a generic construction of TPKE from it.
However, both of above constructions are only secure under 
the static corruption model where the adversary can corrupt  
the servers before the scheme is set up.
Following the work of Wee~\cite{Wee11}, Libert and Yung~\cite{LY12} introduced 
a primitive named all-but-one perfectly sound threshold hash 
proof systems, from which they gave a generic 
construction of TPKE under adaptive corruption model 
where the adversary can corrupt the servers at any time.
The results are important since the adaptive 
adversary is strictly stronger than the static one~\cite{CFGN96,Can00}. 
But they only showed concrete instantiations under
number-theoretic assumptions in bilinear groups which are vulnerable to quantum attacks.
Recently, lattices have been recognized as a viable
foundation for quantum-resistant cryptography.
Bendlin and Damg\aa rd~\cite{BD10} gave the first lattice-based TPKE 
based on a variant of Regev's scheme~\cite{Reg05}.
Xie et al.~\cite{XXZ11} designed the first chosen-ciphertext secure (IND-CCA) TPKE 
based on the LWE assumption.
However, both of above TPKEs are only statically secure, 
and the size of the public key and the ciphertext is at least linear in the number of servers.
Bendlin et al.~\cite{BKP13} converted Identity Based Encryption (IBE)~\cite{GPV08} into
threshold one, which can be transformed into a TPKE via the generic transformation in~\cite{BBH06}.
However, in an offline phase, their scheme needs the parties to perform lots of interactive precomputation.
In summary, the state-of-the-art TPKE is not entirely satisfactory.
On one hand, existing generic constructions of TPKE are designed in the limited static
corruption model which fails to capture realistic attacks.
On the other hand, most existing TPKE schemes are based on number-theoretic
assumptions which are insecure against quantum attacks.

As for RPKE, Naor and Pinkas~\cite{NP00} considered a revocation scenario with a group controller and constructed a RPKE scheme under the DDH assumption. 
Unlike the scenario of~\cite{NP00},
Dodis and Fazio~\cite{DF03}
designed a RPKE in which every user who knows
the revoked identities can encrypt messages and every non-revoked user can decrypt ciphertexts.
Then, they constructed IND-CCA RPKE under
the DDH assumption.
Wee~\cite{Wee11} presented a generic construction of RPKE in static corruption model 
and instantiated the construction under the
DDH assumption and factoring assumption respectively.
However,
all of aforementioned schemes are designed under the number-theoretic assumptions
which are insecure against quantum attack.
\subsection{Motivations}
A central goal in cryptography is to construct cryptosystems in strong security models 
which can resist lots of possible attacks.
Another goal is to build cryptosystems under intractability assumptions which
are as general as possible; in this way, 
we can replace the underlying assumption, 
if some assumption is vulnerable to a new attack or if another yields better performance.
Therefore, generic constructions of TPKE and RPKE in stronger adaptive corruption 
model are advantageous.
Meanwhile, with the development of quantum computer,
designing the quantum-resistant TPKE and RPKE is also necessary.
Last but not least, constructing cryptosystems based on the same cryptographic primitive brings additional
advantages such as reducing the footprint of cryptographic code and easily embedding into systems.

Motivated by above discussions, we ask the following challenging questions:\\

\begin{minipage}{0.92\textwidth}
\emph{Can we construct TPKE and RPKE under adaptive corruption
model from one cryptographic primitive?
Can we instantiate this primitive based on quantum-resistant assumptions?}
\end{minipage}

\subsection{Our Contributions}
We introduce
a cryptographic primitive named TTDF, and derive generic constructions
of TPKE and RPKE under adaptive corruption model from it.
Along the way to instantiate TTDF,
we propose a notion called threshold lossy trapdoor function (TLTDF) and prove that TTDF
is implied by TLTDF, while the latter can be instantiated
based on the DDH assumption and the LWE assumption.
Moreover, we show a relaxation of TTDF called threshold trapdoor relation (TTDR), which
enables the same applications of TPKE and RPKE,
and admits more efficient instantiation based on the DDH assumption.
An overview of our contributions is given in Figure~\ref{fig:contributions}.
\begin{figure}[!hbth]
\centering
\begin{tikzpicture}
    \node [textnode, name=TTDF, draw] {TTDF};
    \node [textnode, name=TTDR, right of=TTDF, xshift=6em, draw] {TTDR};

    \node [textnode, name=TPKE, above of=TTDF, xshift=-6em] {TPKE}; 
    \node [textnode, name=RPKE, above of=TTDF, xshift=6em] {RPKE};

    \node [textnode, name=TLTDF, below of=TTDF, draw] {TLTDF};
    \node [textnode, name=TLTDR, right of=TLTDF, xshift=6em, draw] {TLTDR};

    \node [textnode, name=LWE, below of=TLTDF, xshift=-6em] {LWE};
    \node [textnode, name=DDH, below of=TLTDF, xshift=6em] {DDH}; 

    \draw (LWE) edge[->, thick] node[left] {Sec.~\ref{TLTDF Based on LWE}} (TLTDF);
    \draw (DDH) edge[->, thick] node[right] {Sec.~\ref{TLTDF Based on DDH}} (TLTDF); 
    \draw (DDH) edge[->, thick] node[right] {Sec.~\ref{Threshold Trapdoor Relation}} (TLTDR); 

    \draw (TLTDF) edge[->, thick] node[right] {Sec.~\ref{Threshold Lossy Trapdoor Function}} (TTDF);
    \draw (TLTDF) edge[->, dashed] node[above] {Sec.~\ref{Threshold Trapdoor Relation}} (TLTDR);
    \draw (TTDF) edge[->, dashed] node[above] {Sec.~\ref{Threshold Trapdoor Relation}} (TTDR);
    \draw (TLTDR) edge[->, thick] node[right] {Sec.~\ref{Threshold Trapdoor Relation}} (TTDR);

    \draw (TTDF) edge[->, thick] node[left] {Sec.~\ref{Threshold Encryption from TTDF}} (TPKE);
    \draw (TTDF) edge[->, thick] node[right] {Sec.~\ref{Revocation Encryption from TTDF}}(RPKE);     
 
\end{tikzpicture}
\caption{Overview of the results in this work.}
\label{fig:contributions}
\end{figure}
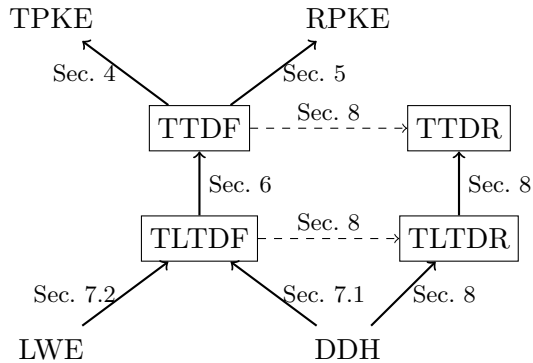

\begin{trivlist}
\item \textbf{Threshold Trapdoor Function.} Informally, TTDF is
a threshold version of trapdoor function. It is parameterized by 
the threshold value $t$ and the number of identities $n$.
$(n,t)$-TTDF splits the master trapdoor into $n$ shared trapdoors.
Every shared trapdoor can be used to compute a piece of inversion share. Then,  
by collecting more than 
$t$ inversion shares, the combiner can recover the preimage. 
Especially, it can even compute inversion shares of any other identity with the help of the preimage.
We formalize security notion for TTDF, namely threshold one-wayness,
which requires that the function remains one-way even
when the adversary can adaptively obtain less than $t$ shared trapdoors.
\end{trivlist}

\begin{trivlist}
\item \textbf{TPKE and RPKE from TTDF.} 
$(n,t)$-TTDF gives rise to a simple construction of $(n,t)$-TPKE and $(n,t-1)$-RPKE.
The main idea of constructing $(n,t)$-TPKE follows constructing public-key encryption from 
trapdoor function. In particular, the sharing algorithm splits the master secret key 
into $n$ shared secret keys, every shared secret key can be used to compute a piece of 
inversion share, and collecting at least $t$ inversion shares can extracts message.
For the security, TTDF holds threshold one-wayness which prevents any PPT adversary who can obtain less than $t$ shared secret keys from decrypting ciphertext, especially under adaptive corruption model. Generally speaking, an adaptive adversary can make the decision of which parties to corrupt at any time during the run of the scheme, in particular, based on the information like the shared trapdoors of corrupted parties gathered. Providing this information is typically the main difficulty in proving adaptive security~\cite{CGJKR99}. TTDF holds a shared trapdoor oracle that given an input of any identity, and outputs a shared trapdoor of this identity. Any PPT adversary can get the information of any corrupted parties by querying the oracle adaptively and obtains at most $t-1$ shared trapdoors.

Following the idea of constructing $(n,t)$-TPKE, we observes that 
$(n,t)$-TPKE holds the security even if exposing $t-1$ shared secret keys. 
Therefore, the GC can revoke any $t-1$ identities by encrypting a new session key and 
exposing their decryption shares.
Every non-revoked user can compute a decryption share and extracts the new session key.
Moreover, threshold one-wayness ensures that 
no PPT adversary can decrypt ciphertext without the non-revoked secret key.
\end{trivlist}

\begin{trivlist}
\item \textbf{Instantiation.} Along the way to instantiate TTDF,
we introduce a new notion called TLTDF, 
which is a threshold version of lossy trapdoor function (LTDF)~\cite{PW08}.
It is parameterized by 
the threshold value $t$ and the number of identities $n$.
Informally,
LTDF has two modes. In the injective mode, it is an injective trapdoor
function. In the lossy mode, it statistically loses an amount of information about the input.
Both of the modes of LTDF are computationally indistinguishable.
However, in both modes of TLTDF,
the master trapdoor can be split into $n$ shared trapdoors and 
every shared trapdoor can be used to compute an inversion share, 
and in the injective mode 
any $t$ inversion shares can be used to retrieve the preimage.
Moreover,
any PPT adversary cannot distinguish both modes,
even when the adversary can adaptively obtain less than $t$ shared trapdoors.

We prove that TTDF is implied by TLTDF and instantiate TLTDF
under the DDH assumption and the LWE assumption respectively.
DDH-based TLTDF is easy to design, while
building LWE-based TLTDF is a non-trivial task.
Intuitively, we transform the inversion algorithm of LTDF into threshold version by
using $(n,t)$-threshold secret sharing scheme~\cite{Sha79}.
Every user gets a shared trapdoor $td_i$, $i\in [n]$, and
computes the inversion share
$\langle a, td_i\rangle+e_i$. Then the combiner obtains $t$ inversion shares to
compute the Lagrangian coefficients $L_i$ for any identity set of size $t$
and recombines the $\langle a, td\rangle$
by computing

\begin{equation*}
\begin{aligned}
L_1(\langle a, td_1\rangle+e_1)+\cdots+L_t(\langle a, td_t\rangle+e_t)&=\langle a, \sum_{i=1}^t L_i\cdot td_i\rangle+\sum_{i=1}^t L_i\cdot e_i\\
&=\langle a, td\rangle+\sum_{i=1}^t L_i\cdot e_i
\end{aligned}
\end{equation*}

Unfortunately, choosing identities in a large identity space causes the norm of errors out of control and
prevents correct inversion.
To resolve this problem, we take advantage of the technique of
``clearing out the denominator''~\cite{Sho00,ABVVW12,BGGK17}.
Note that since the Lagrangian coefficients are
rational numbers and the identity is chosen in $[n]$,
we can scale them to be integers by computing $(n!)^2L_i$.
By instantiating appropriate parameters,
we prove that the quantity of errors preserves bounded, which does not affect the correctness of inversion.
\end{trivlist}

\begin{trivlist}
\item \textbf{Optimization.} We show a relaxation of TTDF called TTDR, 
and prove that TTDR maintains same applications of constructing TPKE and RPKE.
Informally, TTDR replaces the evaluation algorithm of TTDF
with a relation sampling algorithm which can 
generate a random input with its image of a function,
while the function need not be efficiently computable.
We also formalize security notion 
named threshold one-wayness for TTDR following TTDF.

Similarly to instantiating TTDF from TLTDF,
we instantiate TTDR by introducing the notion of threshold lossy trapdoor relation (TLTDR),
which is a threshold version of lossy trapdoor relation (LTDR) \footnote{We give a refined
definition of LTDR in Section \ref{Threshold Trapdoor Relation},
which is more simple and intuitive than the one introduced in~\cite{XLLL14}.}~\cite{XLLL14}.
We prove TTDR is naturally implied by TLTDR.
Moreover, we instantiate TLTDR based on the DDH assumption to obtain an instantiation of TTDR,
which is more efficient than TTDF.
\end{trivlist}

\section{Preliminaries}

\subsection{Notations}
We denote the natural numbers by $\mathbb{N}$, the integers by $\mathbb{Z}$,
the real numbers by $\mathbb{R}$.
We use lower-case bold letters and upper-case bold letters 
to denote vectors and matrices (e.g. $\mathbf{x}$ and $\mathbf{X}$).
Let $\mathbf{x}^T$ and $\mathbf{X}^T$ denote transpositions of 
vector $\mathbf{x}$ and matrix $\mathbf{X}$. 
For $n\in \mathbb{N}$, $1^n$ denotes the string
of $n$ ones, and $[n]$ denotes the set $\{1,\cdots,n\}$. 
We use standard asymptotic $(O,o,\Omega, \omega)$ notation to denote the growth of positive functions.
We denote a negligible function by $\mathsf{negl}(\lambda)$, which is an $f(\lambda)$
such that $f(\lambda)=o(\lambda^{-c})$ for every fixed constant $c$, and we let $\mathsf{poly}(\lambda)$ denote
an unspecified function $f(\lambda)=O(\lambda^c)$ for some constant $c$.
If $S$ is a set then $s\leftarrow S$ denotes the operation of sampling
an element $s$ of $S$ at random.

Let $X$ and $Y$ be two random variables over some countable set $S$.
The statistical distance between X and Y is defined as
$$
\triangle(X,Y)=\frac{1}{2}\sum_{s\in S}|\Pr[X=s]-\Pr[Y=s]|.
$$

\subsection{Assumptions}
\begin{trivlist}
\item \textbf{DDH Assumption.} The generation algorithm 
$\textsf{Gen}$ takes as input a security parameter $1^\lambda$
and outputs $(p, \mathbb{G}, g)$, where $\mathbb{G}$ is a cyclic group of order $p$, 
$p$ is a prime and $g$ is a generator of $\mathbb{G}$.
The DDH assumption \cite{DH76} is that the ensemble $\{(\mathbb{G}, g^a, g^b, g^{ab})\}_{\lambda\in \mathbb{N}}$
and $\{(\mathbb{G}, g^a, g^b, g^{c})\}_{\lambda\in \mathbb{N}}$ 
are computationally indistinguishable, 
where $a, b, c\leftarrow \mathbb{Z}_p$.
\end{trivlist}

\begin{trivlist}
\item \textbf{LWE Assumption.} Let $d$ be the dimension of lattice,
an integer $q=\mathsf{poly}(d)$ and all operations be performed in $\mathbb{Z}_q$.
For an integer dimension $d \in \mathbb{Z}^+$,
a vector $\mathbf{z} \in \mathbb{Z}^d_q$ and 
an error distribution $\chi: \mathbb{Z}_q \rightarrow \mathbb{R}^+$.
$\mathbf{A}_{\mathbf{z},\chi}$ is the distribution 
of the variable $(\mathbf{a}, \langle \mathbf{a}, \mathbf{z}\rangle + e)$
on $\mathbb{Z}^d_q\times \mathbb{Z}_q$,
where $\mathbf{a}\leftarrow \mathbb{Z}^d_q$ and $e\leftarrow \chi$.
The LWE assumption~\cite{Reg05} is that for some secret $\mathbf{z} \in \mathbb{Z}^d_q$ independent samples
from the LWE distribution $\mathbf{A}_{\mathbf{z},\chi}$,
and independent samples from the uniform distribution on $\mathbb{Z}^d_q\times \mathbb{Z}_q$ are
computationally indistinguishable.
\end{trivlist}

\subsection{Randomness Extraction}
We use the notion of average min-entropy~\cite{DORS08}, that 
captures the remaining unpredictability of $X$ conditioned on the value of $Y$:
$$
\widetilde{H}_{\infty}(X|Y)=-\text{lg}(E_{y\leftarrow Y}[2^{-H_{\infty}(X|Y=y)}])
$$
We review the following lemmas from ~\cite{DORS08}.
\begin{lemma}
If $Y$ takes at most $2^r$ values and $Z$ is
any random variable, then $\widetilde{H}_{\infty}(X|(Y,Z))\geq \widetilde{H}_{\infty}(X|Z)-r$.
\end{lemma}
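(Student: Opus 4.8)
The plan is to unfold the definition of average min\nobreakdash-entropy so that the claimed bound becomes an elementary inequality between expectations of maximal conditional probabilities, and then to absorb the ``extra'' variable $Y$ crudely using only the fact that it has at most $2^r$ values. Recall that $2^{-H_\infty(X\mid Y=y,Z=z)}=\max_x\Pr[X=x\mid Y=y,Z=z]$, so by definition
\begin{equation*}
2^{-\widetilde{H}_\infty(X\mid (Y,Z))}=\sum_{y,z}\Pr[Y=y,Z=z]\cdot\max_x\Pr[X=x\mid Y=y,Z=z],
\end{equation*}
and likewise $2^{-\widetilde{H}_\infty(X\mid Z)}=\sum_z\Pr[Z=z]\max_x\Pr[X=x\mid Z=z]$. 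Since $-\text{lg}(\cdot)$ is monotone decreasing, the lemma is equivalent to
\begin{equation*}
\sum_{y,z}\Pr[Y=y,Z=z]\cdot\max_x\Pr[X=x\mid Y=y,Z=z]\ \le\ 2^r\sum_z\Pr[Z=z]\max_x\Pr[X=x\mid Z=z].
\end{equation*}

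First I would rewrite each summand on the left by pushing the nonnegative weight inside the maximum: $\Pr[Y=y,Z=z]\cdot\max_x\Pr[X=x\mid Y=y,Z=z]=\max_x\Pr[X=x,Y=y,Z=z]$. The key (and essentially only) inequality is monotonicity of probability under marginalization, $\Pr[X=x,Y=y,Z=z]\le\Pr[X=x,Z=z]$, which yields $\max_x\Pr[X=x,Y=y,Z=z]\le\max_x\Pr[X=x,Z=z]=\Pr[Z=z]\max_x\Pr[X=x\mid Z=z]$ for every pair $(y,z)$. Summing this bound over all $(y,z)$: for each fixed $z$ the right-hand side does not depend on $y$, and since $Y$ ranges over at most $2^r$ values the inner sum over $y$ contributes a factor of at most $2^r$; summing over $z$ then gives exactly $2^r\cdot 2^{-\widetilde{H}_\infty(X\mid Z)}$. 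Hence $2^{-\widetilde{H}_\infty(X\mid (Y,Z))}\le 2^r\cdot 2^{-\widetilde{H}_\infty(X\mid Z)}$, and applying $-\text{lg}(\cdot)$ gives $\widetilde{H}_\infty(X\mid (Y,Z))\ge\widetilde{H}_\infty(X\mid Z)-r$.

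There is no genuine obstacle here; the argument is a short chain of elementary manipulations. The two points that deserve a little care are (i) correctly moving between conditional and joint probabilities when unfolding the definition, so that the weights $\Pr[Y=y,Z=z]$ and $\Pr[Z=z]$ land in the right places, and (ii) noting that the $2^r$ factor comes from bounding the size of the support of $Y$ \emph{per fixed $z$}, so that no spurious dependence on the alphabet of $Z$ enters the loss term.
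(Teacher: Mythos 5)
Your proof is correct and is the standard argument for this inequality. The paper does not prove this lemma itself — it simply cites it from Dodis, Ostrovsky, Reyzin, and Smith (DORS08) — and your derivation (unfold the definition of $\widetilde{H}_\infty$, pass to joint probabilities, use monotonicity under marginalizing out $Y$, and absorb the sum over at most $2^r$ values of $y$ into a multiplicative factor) is exactly the proof given there.
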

\begin{lemma}
Let $X$, $Y$ be random variables such that $X\in\{0,1\}^l$ and
$\widetilde{H}_{\infty}(X|Y)\geq k$.
Let $H$ be a family of pairwise independent hash functions from $\{0,1\}^l$ to $\{0,1\}^{l'}$.
Then for $h\leftarrow H$, we have
$$
\triangle((Y,h,h(X)),(Y,h,U_{l'}))\leq \epsilon
$$
as long as $l'\leq k-2\lg (1/\epsilon)$.
\end{lemma}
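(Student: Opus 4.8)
The plan is to follow the classical collision-probability proof of the leftover hash lemma, taking care that the side information $Y$ is absorbed through the \emph{average} min-entropy $\widetilde{H}_{\infty}(X|Y)$ rather than through worst-case conditioning. The first step is to peel off $Y$: writing $X_y$ for the random variable $X$ conditioned on $Y=y$ and using that $h$ is drawn independently of $(X,Y)$, the definition of statistical distance gives
$$
\triangle((Y,h,h(X)),(Y,h,U_{l'}))=E_{y\leftarrow Y}\big[E_{h\leftarrow H}[\triangle(h(X_y),U_{l'})]\big].
$$
So it suffices to bound $E_{h\leftarrow H}[\triangle(h(X_y),U_{l'})]$ for each fixed $y$ and then take the expectation over $y$.

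For the second step I would use the standard collision bound: for any random variable $W$ supported on a set of size $N$, Cauchy--Schwarz yields $\triangle(W,U_N)\leq\tfrac{1}{2}\sqrt{N\cdot\mathrm{Col}(W)-1}$, where $\mathrm{Col}(W)=\sum_w\Pr[W=w]^2$ is the collision probability. Applying this with $N=2^{l'}$ to $W=h(X_y)$ for a fixed $h$, and then using concavity of $\sqrt{\cdot}$ (Jensen) to move $E_{h\leftarrow H}$ inside the root, reduces the task to bounding $E_{h\leftarrow H}[\mathrm{Col}(h(X_y))]$. This is precisely where pairwise independence of $H$ enters: for two independent copies $X_y,X_y'$, the collision $h(X_y)=h(X_y')$ is certain when $X_y=X_y'$ and has probability exactly $2^{-l'}$ over $h\leftarrow H$ when $X_y\neq X_y'$, hence $E_{h\leftarrow H}[\mathrm{Col}(h(X_y))]\leq\mathrm{Col}(X_y)+2^{-l'}$. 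Substituting back, the $+2^{-l'}$ term exactly cancels the $-1$ under the root, leaving $E_{h\leftarrow H}[\triangle(h(X_y),U_{l'})]\leq\tfrac{1}{2}\sqrt{2^{l'}\cdot\mathrm{Col}(X_y)}$.

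The third step combines the two bounds. Since $\mathrm{Col}(X_y)\leq\max_x\Pr[X_y=x]=2^{-H_{\infty}(X|Y=y)}$, a further application of Jensen's inequality gives
$$
\triangle((Y,h,h(X)),(Y,h,U_{l'}))\leq\tfrac{1}{2}\sqrt{2^{l'}\cdot E_{y\leftarrow Y}\big[2^{-H_{\infty}(X|Y=y)}\big]}=\tfrac{1}{2}\sqrt{2^{l'-\widetilde{H}_{\infty}(X|Y)}}\leq\tfrac{1}{2}\sqrt{2^{l'-k}},
$$
using the definition of average min-entropy and the hypothesis $\widetilde{H}_{\infty}(X|Y)\geq k$. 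Finally, the assumption $l'\leq k-2\lg(1/\epsilon)$ forces $2^{l'-k}\leq\epsilon^2$, so the right-hand side is at most $\tfrac{1}{2}\epsilon\leq\epsilon$, which is exactly the claimed bound.

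No genuinely hard step arises here, as the statement is quoted from~\cite{DORS08} and could simply be cited; if a self-contained argument is wanted, the only thing requiring care is the bookkeeping --- applying Jensen's inequality in the right order (first over $h$, then over $y$), keeping the $-1$ term so that it is exactly absorbed by the distinct-preimage contribution, and using $\mathrm{Col}(X_y)\leq 2^{-H_{\infty}(X|Y=y)}$ so that it is the average (not the worst-case) min-entropy that appears in the final bound.
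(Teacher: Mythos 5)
Your proof is correct. The paper does not actually prove this lemma---it simply states it as a quotation of Lemma~2.4 from~\cite{DORS08}---and the argument you wrote out is precisely the one from that source: decompose the statistical distance by conditioning on $Y=y$, bound each conditional distance via the collision-probability/Cauchy--Schwarz inequality and pairwise independence (with the $-1$ under the root cancelling the distinct-preimage term $2^{-l'}$), and then apply Jensen over $y$ so that the \emph{average} min-entropy $\widetilde{H}_{\infty}(X|Y)$, rather than the worst-case $\min_y H_{\infty}(X|Y=y)$, is what ends up in the bound. The bookkeeping you flag (Jensen first over $h$, then over $y$; keeping the $-1$ exactly; using $\mathrm{Col}(X_y)\leq 2^{-H_\infty(X|Y=y)}$) is exactly the content that distinguishes this average-case version from the vanilla leftover hash lemma, and you carry it out correctly.
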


\subsection{Threshold Secret Sharing}
We now recall the threshold secret sharing scheme~\cite{Sha79}. It can be  
parameterized by the number of identities $n$ and the threshold value $t$,
and denotes as $(n,t)$-threshold secret sharing scheme.
Let $\mathbb{F}$ be a finite field, $|\mathbb{F}| > n$.
Let $id_i \in \mathbb{F},i=1,\cdots,n$  be distinct, nonzero elements that are fixed and publicly
known. The scheme works as follows:

\bgroup\catcode`\#=12\gdef\hash{#}\egroup\renewcommand{\labelitemi}{$\bullet$}
\begin{itemize}
  \item $\mathsf{Share}(s,id_i)\rightarrow s_i$:
  On input a secret $s\in \mathbb{F}$, and
  any identity $id_i,i\in[n]$. It chooses $a_1,\cdots,a_{t-1}\in \mathbb{F}$,
  and defines the polynomial $p(x)=s+\sum^{t-1}_{i=1}a_ix^i$.
  This is a uniform degree-$(t-1)$ polynomial with constant term $s$.
  The share of user $id_i$ is $s_i=p(id_i)\in \mathbb{F}$.
  \item $\mathsf{Combine}((id_{i_1},s_{i_1}),\cdots,(id_{i_t},s_{i_t}))\rightarrow s$:
  On input
  any $t$ identities $id_{i_j},j=1,\cdots,t$,
  and associated shares $s_{i_j}, j=1,\cdots,t$. Using polynomial interpolation, it computes the unique degree-$(t-1)$
  polynomial $p$ for which $p(id_{i_j}) = s_{i_j}, j=1,\cdots,t$. The combining algorithm outputs the secret $s=p(0)$.
\end{itemize}
\begin{trivlist}
\item \textbf{Correctness.} It is clear that the combining algorithm works since
the secret $p(0)=s$ can be constructed from any $t$ shares. More precisely, 
by the Lagrange interpolation formula, given any $t$ points $(id_{i_j},p(id_{i_j}))$, $j=1,\cdots, t$,
$$
p(x)=\sum^t_{j=1}p(id_{i_j})\prod\limits_{l=1,l\not=j}^t\frac{x-id_{i_l}}{id_{i_j}-id_{i_l}},
$$
we can compute all points  
$(id_{i_v},p(id_{i_v}))$, $id_{i_v}\in \mathbb{F}$, where the secret is a special point $(0,s=p(0))$.
\end{trivlist}

\begin{trivlist}
\item \textbf{Security.} The sharing algorithm $\mathsf{Share}$ has perfect privacy, that is,
any $t-1$ users learn nothing of secret $s$ from their shares.
For any $t-1$ users corresponding to identities $id_{i_j}, j=1,\cdots,t-1$
and for any secret $s$ (namely, $p(0)$),
the distributions of $t-1$ shares of $s$ are perfectly indistinguishable from
$t-1$ independently uniform distributions.

In this paper, when building TLTDF
from the LWE assumption, 
we take advantage of the technique of
``clearing out the denominator''~\cite{Sho00,ABVVW12,BGGK17}
and the fact that the term $(n!)^2\cdot L_j$ is an integer,
where $L_j, j=1,\cdots, t$ are Lagrangian coefficients.
\end{trivlist}

\begin{lemma} (\cite{BGGK17}, Lemma 2.2).
For any $t$ identities $id_{i_j}=i_j,i_j\in [n],j=1,\cdots, t$,
the product $(n!)^2\cdot L_j$ is an integer,
and $|(n!)^2\cdot L_j|\leq (n!)^3$.
\end{lemma}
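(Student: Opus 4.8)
The plan is to compute directly from the Lagrange interpolation formula specialized to the evaluation point $x=0$. Writing $id_{i_j}=i_j$, the $j$-th Lagrangian coefficient is
$$
L_j=\prod_{l=1,\,l\neq j}^{t}\frac{0-i_l}{i_j-i_l}=(-1)^{t-1}\frac{\prod_{l\neq j}i_l}{\prod_{l\neq j}(i_j-i_l)}.
$$
Set $N_j:=\prod_{l\neq j}i_l$ and $D_j:=\prod_{l\neq j}|i_j-i_l|$, so that $|L_j|=N_j/D_j$ with $D_j\geq 1$. It then suffices to prove (i) $D_j$ divides $(n!)^2$ (which, since $N_j\in\mathbb{Z}$, forces $(n!)^2L_j=\pm N_j\cdot(n!)^2/\prod_{l\neq j}(i_j-i_l)$ to be an integer), and (ii) $N_j\leq n!$ (which then gives $|(n!)^2L_j|=(n!)^2N_j/D_j\leq(n!)^3$).

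For (i), I would split the index set $\{l:l\neq j\}$ according to whether $i_l<i_j$ or $i_l>i_j$. In the first case the differences $i_j-i_l$ are \emph{distinct} positive integers, each at most $i_j-1$, and a product of distinct elements of $\{1,\dots,m\}$ always divides $m!$; hence this sub-product divides $(i_j-1)!$. In the second case the quantities $i_l-i_j$ are distinct positive integers each at most $n-i_j$, so that sub-product divides $(n-i_j)!$. Multiplying, $D_j$ divides $(i_j-1)!\,(n-i_j)!$. Finally $(i_j-1)!\,(n-i_j)!$ divides $n!$ because $n!/\bigl((i_j-1)!\,(n-i_j)!\bigr)=(n-i_j+1)\binom{n}{i_j-1}$ is an integer. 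Therefore $D_j\mid n!\mid(n!)^2$.

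For (ii), $N_j$ is a product of $t-1\leq n-1$ distinct elements of $\{1,\dots,n\}$, which is maximized by taking the $t-1$ largest ones; thus $N_j\leq n(n-1)\cdots(n-t+2)\leq n!$. Combined with the divisibility from (i) and $D_j\geq 1$, this yields $(n!)^2L_j\in\mathbb{Z}$ and $|(n!)^2L_j|\leq(n!)^3$, completing the proof.

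The one point requiring care is step (i): one cannot simply merge the positive differences and the absolute values of the negative differences into a single set of distinct integers, since a positive difference can numerically equal the absolute value of a negative one. The argument therefore has to treat the two halves separately, bound each by its own factorial, and only then recombine via the binomial-coefficient identity; everything else is routine bookkeeping.
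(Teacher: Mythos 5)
The paper does not actually prove this lemma---it is stated as a citation to~\cite{BGGK17}, so there is no in-paper argument to compare your proposal against. Your standalone proof is correct. Specializing the Lagrange formula at $x=0$ gives $L_j=(-1)^{t-1}N_j/\prod_{l\neq j}(i_j-i_l)$ with $N_j=\prod_{l\neq j}i_l$, and your key step (i) correctly handles the divisibility of $D_j=\prod_{l\neq j}|i_j-i_l|$: the differences with $i_l<i_j$ are distinct, positive, and at most $i_j-1$, hence their product divides $(i_j-1)!$; symmetrically the others divide $(n-i_j)!$; and $(i_j-1)!(n-i_j)!\mid n!$ via the binomial identity $n!/\bigl((i_j-1)!(n-i_j)!\bigr)=(n-i_j+1)\binom{n}{i_j-1}\in\mathbb{Z}$. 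Your caveat at the end is exactly right and worth stating: a positive difference $i_j-i_l$ and the absolute value of a negative one $i_{l'}-i_j$ can coincide numerically, so one must bound the two halves separately rather than appealing once to ``distinct integers $\leq n-1$.'' With $D_j\geq 1$ and $N_j\leq n(n-1)\cdots(n-t+2)\leq n!$ from step (ii), the bound $|(n!)^2L_j|=(n!)^2N_j/D_j\leq(n!)^3$ follows. In fact you have shown the stronger statement that already $n!\cdot L_j\in\mathbb{Z}$ with $|n!\cdot L_j|\leq(n!)^2$, which implies the cited lemma as a corollary; the factor $(n!)^2$ in the statement is the form inherited from Shoup's original ``clearing denominators'' trick in threshold RSA and is retained in~\cite{BGGK17} for uniformity, not because it is sharp.
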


\subsection{Threshold Encryption}
We now recall the definition of TPKE from~\cite{Wee11}.
A $(n,t)$-TPKE consists of five algorithms as follows:
\begin{itemize}
  \item $\mathsf{Gen}(1^\lambda)\rightarrow (pk,msk)$: On input the security parameter $1^\lambda$,
  the key generation algorithm outputs a public key $pk$ and
  a master secret key $msk$.

  \item $\mathsf{Share}(msk,id_i)\rightarrow sk_{i}$:
  On input the master secret key $msk$ and a new identity $id$ associated with the user,
  the sharing algorithm outputs the shared secret key $sk_{i}$.

  \item $\mathsf{Enc}(pk,m)\rightarrow c$: On input the public key $pk$ and a message $m$,
  the encryption algorithm outputs a ciphertext $c$.

  \item $\mathsf{Dec}(sk_i,c)\rightarrow \delta_i$: On input a shared secret key $sk_i,i\in[n]$
  and the ciphertext $c$, the decryption algorithm outputs a
  decryption share $\delta_i$.

  \item $\mathsf{Combine}(\delta_{i_1},\cdots,\delta_{i_t},c)\rightarrow m$:
  On input any $t$ decryption shares $\delta_{i_j},j=1,\cdots,t$ and the ciphertext $c$,
  the combining algorithm outputs the message $m$.
\end{itemize}

\begin{trivlist}
\item \textbf{Correctness.} For any message $m$, $c\leftarrow \mathsf{Enc}(pk,m)$,
and any $t$ decryption shares $\delta_{i_1},\cdots,\delta_{i_t}$,
we have $\mathsf{Combine}(\delta_{i_1},\cdots,\delta_{i_t},c)=m$.
\end{trivlist}

\begin{trivlist}
\item \textbf{Security.} Let $\mathcal{A}$ be a PPT adversary against IND-CPA security of TPKE scheme
with adaptive corruption. Its advantage function 
$\mathbf{Adv}^{\text{ind-cpa}}_{\text{TPKE},\mathcal{A}}(\lambda)$ is defined as

\begin{equation*}
\begin{aligned}
\mathbf{Adv}^{\text{ind-cpa}}_{\text{TPKE},\mathcal{A}}(\lambda)=\left|\Pr\left[b=b':\begin{array}{l}
                  (pk,msk)\leftarrow \mathsf{Gen}(1^\lambda);  \\
                  (m_0, m_1)\leftarrow \mathcal{A}^{\mathsf{Share}(msk, \cdot)}(pk); \\
                   b\leftarrow \{0,1\}, c^* \leftarrow \mathsf{Enc}(pk,m_b); \\
                   b'\leftarrow \mathcal{A}^{\mathsf{Dec}(\cdot,\mathsf{Enc}(pk))}(pk,sk_{i_j},c^*);\\
                \end{array}
\right]-\frac{1}{2}\right|
\end{aligned}
\end{equation*}
Here, $\mathsf{Share}(msk, \cdot)$ denotes a shared secret key oracle that given an input of any identity $id$, and outputs a shared secret key $sk_{id}$. The adversary can query the oracle at most $t-1$ times adaptively .
$\mathsf{Dec}(\cdot,\mathsf{Enc}(pk))$ 
denotes an oracle that given an input of any identity $id$, 
computes a fresh ciphertext
$c$ using $\mathsf{Enc}(pk)$ and returns a decryption share $\mathsf{Dec}(sk_{id},c)$. 
This captures that 
the adversary may obtain decryption shares of fresh encryptions of known messages. 
The $(n,t)$-TPKE scheme is
IND-CPA secure, 
if for all PPT adversary the advantage function is negligible.

\end{trivlist}

\subsection{Revocation Encryption}
We recall the definition of RPKE from~\cite{NP00}.
A $(n,r)$-RPKE consists of four algorithms as follows:

\begin{itemize}
\item $\mathsf{Gen}(1^\lambda,r)\rightarrow (pk,msk)$:
  On input the security parameter $1^\lambda$, and the revocation threshold $r$,
  the key generation algorithm outputs a public key $pk$ and a master secret key $msk$.

  \item $\mathsf{Reg}(msk,id_i)\rightarrow sk_{i}$:
  On input the master secret key $msk$ and a new identity $id$ associated with the user,
  the registration algorithm outputs the shared secret key $sk_{i}$.

  \item $\mathsf{Enc}(pk,S,s)\rightarrow c$:
  On input the public key $pk$, a set $S$\footnote{The set $S$ contains the identities and shared secret keys of 
  revoked users.} 
  of revoked users (with $|S|\leq r$)
  and a session key $s$,
  the encryption algorithm outputs a ciphertext $c$.

  \item $\mathsf{Dec}(sk_i,c)\rightarrow s$:
  On input a shared secret key $sk_i$ of user $id_i$ and the ciphertext $c$, the decryption algorithm outputs
  the session key $s$, if $id_i$ is a legitimate user when $c$ is constructed.
\end{itemize}

\begin{trivlist}
\item \textbf{Correctness.} For any $id_i$, $i\in[n]$,
$(pk,msk)\leftarrow \mathsf{Gen}(1^\lambda)$, any $s$,
and any set $S$, $c\leftarrow \mathsf{Enc}(pk,S,s)$,
we require that for any non-revoked secret key $sk_i$, $s=\mathsf{Dec}(sk_i,c)$.
\end{trivlist}

\begin{trivlist}
\item \textbf{Security.}
Let $\mathcal{A}$ be a PPT adversary against IND-CPA security of RPKE scheme with adaptive corruption. 
Its advantage function $\textbf{Adv}^{\text{ind-cpa}}_{\text{RPKE},\mathcal{A}}(\lambda)$ is defined as

\begin{equation*}
\begin{aligned}
\textbf{Adv}^{\text{ind-cpa}}_{\text{RPKE},\mathcal{A}}(\lambda)=\left|\Pr\left[b=b':\begin{array}{l}
                  (pk,msk)\leftarrow \mathsf{Gen}(1^\lambda);  \\
                  (s_0, s_1)\leftarrow \mathcal{A}^{ \mathsf{Reg}(msk,\cdot)}(pk); \\
                   b\leftarrow \{0,1\}, c^* \leftarrow \mathsf{Enc}(pk,S,s_b); \\
                   b'\leftarrow \mathcal{A}(pk,sk_{i_j},c^*),j\in[r];\\
                \end{array}\right]-\frac{1}{2}\right|
\end{aligned}
\end{equation*}
Here, $\mathsf{Reg}(msk, \cdot)$ denotes an oracle that given an input of any identity $id$, and outputs a shared secret key $sk_{id}$. The adversary can query the oracle at most $r$ times adaptively.
If for all PPT adversary the advantage function is negligible, the $(n,r)$-RPKE scheme is IND-CPA secure. 

\end{trivlist}

\section{Threshold Trapdoor Function}\label{Threshold Trapdoor Function}
We give the definition and the security of TTDF as follows.
\begin{definition}
A collection of $(n,t)$-TTDFs is a tuple of polynomial-time algorithms
defined as follows:
\begin{itemize}
  \item $\mathsf{Gen}(1^\lambda)\rightarrow (ek,mtd)$:
  The generation algorithm is a probabilistic algorithm that 
  on input the security parameter $1^\lambda$,
  outputs a function index $ek$ and
  a master trapdoor $mtd$.
  \item $\mathsf{Share}(mtd,id_i)\rightarrow td_i$:
  The sharing algorithm is a deterministic algorithm that 
  on input the master trapdoor $mtd$ and any identity $id_i,i\in[n]$,
  outputs the shared trapdoor $td_i,i\in[n]$.
  \item $\mathsf{F}(ek,x)\rightarrow y$:
  On input the function index $ek$ and $x\in \{0,1\}^l$,
  the evaluation algorithm outputs $y$.
  \item $\mathsf{F}^{-1}(td_i,y)\rightarrow \delta_{i}$:
  On input any shared trapdoor $td_{i}$
  and an image $y$,
  the partial inversion algorithm outputs the inversion share $\delta_{i}$.
  \item $\mathsf{CombineF}^{-1}(ek,x,\delta_{i_1},\cdots,\delta_{i_{t-1}},id_{i_t})\rightarrow \delta_{i_t}$:
  On input $ek$, $x\in \{0,1\}^l$,   
  any $t-1$ inversion shares $\delta_{i_1},\cdots,\delta_{i_{t-1}}$ of the image of $x$, and identity $id_{i_t}$, 
  the combining inversion algorithm outputs the inversion share $\delta_{i_t}$ of identity $id_{i_t}$.
  \item $\mathsf{Combine}(\delta_{i_1},\cdots,\delta_{i_{t}},y)\rightarrow x$:
  On input any $t$ inversion shares $\delta_{i_j},j=1,\cdots,t$ and the image $y$,
  the combining algorithm outputs $x$.
\end{itemize}
\end{definition}
Note that we require that in the partial inversion algorithm and 
the combining algorithm, if a value $y$ is not in the image, the behavior of the algorithms are unspecified.

\begin{trivlist}
\item \textbf{Correctness.} For any $id_i$, $(ek, mtd)$$\leftarrow$$\mathsf{Gen}(1^\lambda)$,
$td_i$$\leftarrow$$\mathsf{Share}(mtd$, $id_i)$, $i\in[n]$,
$x\leftarrow \{0,1\}^l$, $y=\mathsf{F}(ek,x)$,
we require that 
for any $t$ shared trapdoors $td_{i_1}$, $\cdots$, $td_{i_t}$, we have 
$$x=\mathsf{Combine}(\mathsf{F}^{-1}(td_{i_1},y),\cdots,\mathsf{F}^{-1}(td_{i_t},y),y).$$
\end{trivlist}

\begin{trivlist}
\item \textbf{Security.}
Let $\mathcal{A}$ be a PPT adversary against $(n,t)$-TTDF
and define its advantage function $\textbf{Adv}^{\text{tow}}_{\text{TTDF},\mathcal{A}}(\lambda)$ as

\begin{equation*}
\begin{aligned}
\textbf{Adv}^{\text{tow}}_{\text{TTDF},\mathcal{A}}(\lambda)=\Pr\left[x=x':\begin{array}{l}
                  (ek,mtd)\leftarrow \mathsf{Gen}(1^\lambda);  \\
                  x\leftarrow \{0,1\}^l, y=\mathsf{F}(ek,x); \\
                  x'\leftarrow \mathcal{A}^{\mathsf{Share}(mtd, \cdot)}(ek,y)\\
                \end{array}
\right]
\end{aligned}
\end{equation*}
Here, $\mathsf{Share}(mtd, \cdot)$ denotes a shared trapdoor oracle that given an input of any identity $id$, and outputs a shared trapdoor $td_{id}$. The adversary can query the oracle at most $t-1$ times adaptively.
If for any PPT adversary the advantage function is negligible, 
$(n,t)$-TTDF is threshold one-way.
\end{trivlist}

\subsection{Connection to Function Sharing}
De Santis et al.~\cite{DDFY94} introduced the notion of function sharing (FS) parameterized by 
the threshold value $t$ and the number of identities $n$. 
$(n,t)$-FS can split the master trapdoor into $n$ shared trapdoors, 
where $n$ is a fixed polynomial of the security parameter.
The function is easy to invert when given
$t$ shared trapdoors, 
while any PPT adversary
cannot invert the function even if it obtains any $t-1$ shared trapdoors
and a history tape $H$ that contains partial inversion shares of polynomial many random 
images. Then they constructed threshold cryptosystems based on FS and 
instantiated it under the RSA assumption. 
However, the number of identities of their FS and TPKE 
is limited in a fixed polynomial of security parameter. 

Our notion of TTDF differs from FS as follows: 
TTDF can support exponential number of identities and 
the generation algorithm and the sharing algorithm of TTDF are 
independent of the number of identities. Moreover, 
in the security experiment of TTDF, it omits the complicated 
history tape defined in the security experiment of FS. 
More precisely, 
$(n,t)$-TTDF has an additional combining inversion algorithm that  
given the function index $ek$, 
any preimage $x$ and any $t-1$ inversion shares of the image of $x$,
can compute the inversion share of any other identity. 
In the security experiment of $(n,t)$-TTDF, 
any adversary given any $t-1$ shared trapdoors 
can compute inversion shares of any other identity with the help of the preimage. 
While in the security experiment of $(n,t)$-FS, 
the adversary can only look up the history tape $H$ to 
obtain inversion shares of some identities contained in $H$, and 
the length of $H$ is a fixed polynomial of the security parameter.  
Therefore, $(n,t)$-TTDF implies $(n,t)$-FS, and 
$(n,t)$-TTDF can be used to construct TPKE scheme~\cite{Wee11} which supports ad-hoc groups 
(i.e., exponential number of identities and the generation algorithm is independent of 
the number of identities), 
the reason is that the reduction algorithm who holds any $t-1$ shared trapdoors can answer the 
shared decryption oracle of all identities.

\section{Threshold Encryption from TTDF}\label{Threshold Encryption from TTDF}

Let $(\mathsf{Gen},\mathsf{Share},\mathsf{F},
\mathsf{F}^{-1},\mathsf{CombineF}^{-1},\mathsf{Combine})$ be a $(n,t)$-TTDF and
$\mathsf{hc}(\cdot)$ be a hardcore function. We construct a TPKE as follows:

\begin{itemize}
  \item $\mathsf{Gen}(1^\lambda)\rightarrow (pk,msk)$:
  On input the security parameter $1^\lambda$, 
  the generation algorithm runs $(ek,mtd)\leftarrow \text{TTDF.}\mathsf{Gen}(1^\lambda)$
  and outputs a public key $pk=ek$ and a master secret key $msk=mtd$.

  \item $\mathsf{Share}(msk,id_i)\rightarrow sk_i$:
  On input the master secret key $msk$ and any identity $id_i,i\in[n]$,
  the sharing algorithm runs $td_i\leftarrow \text{TTDF.}\mathsf{Share}(msk,id_i)$ and
  outputs the shared secret key $sk_i=td_i,i\in[n]$.

  \item $\mathsf{Enc}(pk,m)\rightarrow c$:
  On input the public key $pk$ and a message $m$,
  the encryption algorithm chooses $x\leftarrow \{0,1\}^l$,
  computes $c_1=\text{TTDF.}\mathsf{F}(pk,x)$, $c_2=\mathsf{hc}(x)\oplus m$,
  and outputs the ciphertext $c=(c_1,c_2)$.

  \item $\mathsf{Dec}(sk_i,c)\rightarrow \delta_i$:
  On input a secret key $sk_i$ and a ciphertext $c$, the decryption algorithm computes
  $\delta_i=\text{TTDF.}\mathsf{F}^{-1}(sk_i,c_1)$, and outputs a decryption share $\delta_i$.

  \item $\mathsf{Combine}(\delta_{i_1},\cdots,\delta_{i_t},c)\rightarrow m$:
  On input any $t$ decryption shares $\delta_{i_j},j=1,\cdots,t$ and the ciphertext $c=(c_1,c_2)$,
  the combining algorithm computes $x=\text{TTDF.}\mathsf{Combine}(\delta_{i_1},\cdots,\delta_{i_t},c_1)$,
  $m=\mathsf{hc}(x)\oplus c_2$.
  It outputs the message $m$.
\end{itemize}

\begin{theorem}
If the TTDF is threshold one-way, then the TPKE is IND-CPA secure.
\end{theorem}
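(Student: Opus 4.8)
The plan is to reduce an IND-CPA adversary $\mathcal{A}$ against the TPKE to an adversary $\mathcal{B}$ against the threshold one-wayness of the underlying TTDF, using the hardcore function $\mathsf{hc}(\cdot)$ of $\mathsf{F}$ as the bridge. I would set this up as a short hybrid argument. In the real game, the challenge ciphertext is $c^*=(c_1^*,c_2^*)$ with $c_1^*=\mathsf{F}(ek,x)$ for random $x\leftarrow\{0,1\}^l$ and $c_2^*=\mathsf{hc}(x)\oplus m_b$. The first hybrid replaces $\mathsf{hc}(x)$ by a uniformly random string $u$ of the same length, so that $c_2^*=u\oplus m_b$; in this hybrid $c_2^*$ is independent of $b$, hence $\mathcal{A}$'s advantage is exactly $0$. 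It then suffices to show the two games are computationally indistinguishable, which I would argue by a hardcore-predicate/Goldreich--Levin style reduction: an adversary distinguishing them can be turned into one predicting $\mathsf{hc}(x)$ from $(ek,\mathsf{F}(ek,x))$, contradicting that $\mathsf{hc}$ is hardcore for $\mathsf{F}$ — and $\mathsf{F}$ being (threshold) one-way is exactly what the hypothesis provides. Formally one can fold this into a single reduction $\mathcal{B}$ that inverts $\mathsf{F}$ given only $t-1$ shared trapdoors.

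The reduction $\mathcal{B}$ works as follows. It receives $(ek,y^*)$ from the TTDF challenger, where $y^*=\mathsf{F}(ek,x^*)$ for a random hidden $x^*$, together with oracle access to $\mathsf{Share}(mtd,\cdot)$ which it may query at most $t-1$ times. It sets $pk=ek$ and runs $\mathcal{A}(pk)$. Whenever $\mathcal{A}$ queries its $\mathsf{Share}(msk,\cdot)$ oracle on an identity $id$, $\mathcal{B}$ forwards the query to its own $\mathsf{Share}(mtd,\cdot)$ oracle and returns the answer; since $\mathcal{A}$ makes at most $t-1$ such queries, $\mathcal{B}$ stays within budget. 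When $\mathcal{A}$ outputs $(m_0,m_1)$, $\mathcal{B}$ picks $b\leftarrow\{0,1\}$, a uniformly random string $r$ of the length of $\mathsf{hc}(\cdot)$, and sets $c^*=(y^*,\,r)$. It hands $(pk, \{sk_{i_j}\}, c^*)$ to $\mathcal{A}$ for the second phase, answering any further $\mathsf{Share}$ queries the same way. Crucially, $\mathcal{B}$ must also simulate the decryption-share oracle $\mathsf{Dec}(\cdot,\mathsf{Enc}(pk))$ of the TPKE security experiment: here $\mathcal{B}$ itself picks fresh $x\leftarrow\{0,1\}^l$, computes $c_1=\mathsf{F}(ek,x)$, and — since it knows the preimage $x$ and holds the $t-1$ shared trapdoors — uses $\mathsf{F}^{-1}$ on those trapdoors plus $\mathsf{CombineF}^{-1}(ek,x,\delta_{i_1},\dots,\delta_{i_{t-1}},id)$ to produce a correct inversion share for the queried identity, exactly as the combining-inversion algorithm guarantees. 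When $\mathcal{A}$ halts, $\mathcal{B}$ runs the Goldreich--Levin-type predictor on $\mathcal{A}$'s behaviour (or, if $\mathsf{hc}$ is a hardcore function realized via a random seed, uses $\mathcal{A}$'s distinguishing bias directly) to recover a candidate $x'$ and outputs it.

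For correctness of the simulation, observe that when the hardcore value is replaced by a truly random $r$, the ciphertext $c^*=(y^*,r)$ carries no information about $b$, so $\mathcal{A}$'s output is independent of $b$; any noticeable deviation of $\mathcal{A}$'s success probability from $1/2$ must therefore come from the substitution of $\mathsf{hc}(x^*)$ by $r$, and standard hardcore-function machinery converts this bias into an algorithm predicting $\mathsf{hc}(x^*)$ — hence inverting $\mathsf{F}$ — from $(ek,y^*)$ using only $t-1$ shared trapdoors, contradicting threshold one-wayness. The main obstacle, and the point that deserves the most care in writing the full proof, is the faithful simulation of the decryption-share oracle: the reduction must answer $\mathsf{Dec}$ queries on arbitrary identities including the $n-(t-1)$ whose trapdoors it does not possess, and it can only do so because it generates the queried ciphertexts itself and therefore knows their preimages, which lets it invoke $\mathsf{CombineF}^{-1}$. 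One must check that the distribution of these simulated decryption shares is identical to that in the real experiment (this is where the correctness property of TTDF, applied to freshly sampled $x$, is used) and that consistency is maintained across all queries. The remaining steps — the hybrid swap and the hardcore-bit reduction — are routine.
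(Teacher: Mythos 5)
Your proposal follows essentially the same route as the paper's proof: the same two-game hybrid (real challenge vs.\ replacing $\mathsf{hc}(x)$ with a uniform string), the same observation that the second game gives advantage exactly $1/2$, the same forwarding of $\mathsf{Share}$ queries to the TTDF challenger's shared-trapdoor oracle, and — the key step you correctly flag as the main subtlety — the same simulation of the decryption-share oracle by having the reduction sample its own preimage $x'$ so that $\mathsf{CombineF}^{-1}$ can produce inversion shares for arbitrary identities without holding more than $t-1$ trapdoors. The only cosmetic difference is packaging: the paper stops at a distinguisher against the hardcore-function game and appeals to the hardcore/one-wayness connection in one line, whereas you fold that step into a single Goldreich--Levin-style inverter; the substance is identical.
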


\begin{proof}
We define two hybrid experiments $\text{Game}_1$, $\text{Game}_2$.
\begin{itemize}
\item $\text{Game}_1$: The game is identical to the IND-CPA experiment. At the beginning,
the challenger runs $\mathsf{Gen}$
to obtain $pk$ and $msk$.
The challenger sends $pk$ to $\mathcal{A}$.
$\mathcal{A}$ can query the shared secret key oracle $\mathsf{Share}(msk,\cdot)$ adaptively.
Then the challenger runs the sharing algorithm 
to answer $\mathcal{A}$.
Upon receiving the messages $m_0$, $m_1$ from $\mathcal{A}$, 
the challenger chooses $b\in\{0,1\}$ at random and
returns $c^*=\mathsf{Enc}(pk,m_b)$ to $\mathcal{A}$.
$\mathcal{A}$ is still able to have access to 
the oracle $\mathsf{Dec}(\cdot,\mathsf{Enc}(pk))$.
At the end of the game, $\mathcal{A}$ outputs $b'\in \{0,1\}$ as the guess of $b$.
If $b'=b$, $\mathcal{A}$ wins this game, otherwise fails.

\item $\text{Game}_2$: The game is identical to 
$\text{Game}_1$, except that when the challenger generates the
challenge ciphertext $c^*=(c_1,c_2)$, it replaces $c_2=m_b\oplus \mathsf{hc}(x)$
with $c_2=m_b\oplus r$.
\end{itemize}
For $i\in \{1,2\}$, let $\Pr[\mathcal{A}^{\text{Game}_i}=b]$ be the probability
that $\mathcal{A}$ outputs the
bit $b$ when executed in $\text{Game}_i$.
We claim that if there is an adversary $\mathcal{A}$ against
the TPKE
such that $\Pr[\mathcal{A}^{\text{Game}_1}=b]
-\Pr[\mathcal{A}^{\text{Game}_2}=b]$ is non-negligible, we can
construct a distinguisher $\mathcal{D}$ against the hardcore function.
On input $(ek,y,r)$,
where $ek$ is a function index,
$y=\mathsf{F}(ek,x)$ with $x\leftarrow\{0,1\}^l$ and $r$ is either $\mathsf{hc}(x)$ or
a random string, $\mathcal{D}$ works as follows:

\begin{enumerate}
  \item $\mathcal{D}$ runs $\mathcal{A}$ on input $pk=ek$.$\mathcal{D}$ can simulate the shared secret key oracle by querying the shared trapdoor oracle $\mathsf{Share}(mtd,\cdot)$ adaptively.

  \item Upon receiving two messages $m_0$, $m_1$ from $\mathcal{A}$, 
  $\mathcal{D}$ chooses $b\in \{0,1\}$ at random, let $c_1=y$, $c_2=m_b\oplus r$,
  and returns $c=(c_1,c_2)$ to $\mathcal{A}$. 
  $\mathcal{A}$ is able to have access to the oracle $\mathsf{Dec}(\cdot,\mathsf{Enc}(pk))$. 
  $\mathcal{D}$ chooses $x'$\footnote{The preimage $x'$ is chosen by $\mathcal{D}$, so it can computes decryption share of any identity.} in the domain at random, fixes the corrupted secret keys, and computes the decryption share by the combining inversion algorithm to simulate the oracle $\mathsf{Dec}(\cdot,\mathsf{Enc}(pk))$. More precisely, $\mathcal{D}$ computes 
  $c'_1=\text{TTDF.}\mathsf{F}(pk,x')$, 
  $\delta_{i_1}=$TTDF.$\mathsf{F}^{-1}(td_{i_1},c_1)$, $\cdots$, 
  $\delta_{i_{t-1}}=$TTDF.$\mathsf{F}^{-1}(td_{i_{t-1}}$, $c_1)$,
  $\delta'=$TTDF.$\mathsf{CombineF}^{-1}($
  $x'$, $c_1$, $\delta_{i_1}$, 
  $\cdots$, $\delta_{i_{t-1}}$, $id')$,
  then $\mathcal{D}$ returns $\delta'$ to $\mathcal{A}$.
  At last $\mathcal{D}$ outputs what $\mathcal{A}$ outputs.

  \item if $b=b'$, $\mathcal{D}$ returns ``1'' to denote $r$ is the output of the hardcore function,
  otherwise returns ``0'' to denote $r$ is a random string.
\end{enumerate}
The distinguisher $\mathcal{D}$ can give a perfect simulation of either $\text{Game}_1$ 
or $\text{Game}_2$.
The advantage of $\mathcal{D}$ is non-negligible, which is a contradiction of the threshold one-wayness.
Therefore,
$|\Pr[\mathcal{A}^{\text{Game}_1}=b]-
\Pr[\mathcal{A}^{\text{Game}_2}=b]|\leq \mathsf{negl}(\lambda)$.

Finally, in $\text{Game}_2$
the output of hardcore function has been replaced with a random string, 
so $\Pr[\mathcal{A}^{\text{Game}_2}=b]=1/2$.
We have:

\begin{equation*}
\begin{aligned}
\Pr[\mathcal{A}^{\text{Game}_1}=b]&\leq |\Pr[\mathcal{A}^{\text{Game}_1}=b]-\Pr[\mathcal{A}^{\text{Game}_2}=b]|+\Pr[\mathcal{A}^{\text{Game}_2}=b]\\
&\leq \frac{1}{2}+\mathsf{negl}(\lambda)
\end{aligned}
\end{equation*}
Therefore, the TPKE is IND-CPA secure.
\end{proof}

\section{Revocation Encryption from TTDF}\label{Revocation Encryption from TTDF}

Let $(\mathsf{Gen},\mathsf{Share},\mathsf{F},
\mathsf{F}^{-1},\mathsf{CombineF}^{-1},\mathsf{Combine})$ be a $(n,t)$-TTDF and
$\mathsf{hc}(\cdot)$ be a hardcore function.
We construct a $(n,t-1)$-RPKE as follows:

\begin{itemize}
\item $\mathsf{Gen}(1^\lambda)\rightarrow (pk,msk)$:
  On input the security parameter $1^\lambda$,
  the generation algorithm
  runs $(ek,mtd)\leftarrow\text{TTDF.}\mathsf{Gen}(1^\lambda)$ and
  outputs a public key $pk=ek$ and a master secret key $msk=mtd$.

  \item $\mathsf{Reg}(msk,id_i)\rightarrow sk_i$:
  On input the master secret key $msk$ and any identity $id_i,i\in[n]$,
  the registration algorithm
  runs $td_i\leftarrow\text{TTDF.}\mathsf{Share}(mtd,id_i)$ and
  outputs the shared secret key $sk_i=td_i,i\in[n]$.

  \item $\mathsf{Enc}(pk,sk_{i_1},\cdots,sk_{i_{t-1}},s)\rightarrow c$:
  On inputs the public key $pk$, a set of $t-1$ revoked secret keys
  $sk_{i_j},j=1,\cdots,t-1$
  and a session key $s$, the encryption algorithm chooses $x\leftarrow \{0,1\}^l$,
  computes $c_1=\text{TTDF.}\mathsf{F}(pk,x)$, $c_2=\mathsf{hc}(x)\oplus s$ and
  $\delta_{i_j}=\text{TTDF.}\mathsf{F}^{-1}(sk_{i_j},c_1),j=1,\cdots,t-1$.
  It outputs the ciphertext $c=(c_1,c_2,\delta_{i_1},\cdots,\delta_{i_{t-1}})$.

  \item $\mathsf{Dec}(sk_{i_j},c)\rightarrow s$:
  On inputs a secret key $sk_{i_j},j\not=1,\cdots,t-1$ and a ciphertext $c$, the decryption algorithm computes
  $\delta_{i_j}=\text{TTDF.}\mathsf{F}^{-1}(sk_{i_j},c_1),j\not=1,\cdots,t-1$,
  $x=\text{TTDF.}\mathsf{Combine}(\delta_{i_1}$, $\cdots$, $\delta_{i_{t-1}}$, $\delta_{i_j}$, $c_1)$
  and $s=\mathsf{hc}(x)\oplus c_2$.
  It outputs session key $s$.
\end{itemize}
\begin{theorem}
If the TTDF is threshold one-way, then the RPKE is IND-CPA secure.
\end{theorem}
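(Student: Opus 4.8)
The plan is to follow the same two-game strategy already used for the TPKE construction. First I would let $\text{Game}_1$ be the genuine IND-CPA experiment for the RPKE: the challenger runs $\mathsf{Gen}$, hands $pk=ek$ to $\mathcal{A}$, answers each $\mathsf{Reg}(msk,\cdot)$ query by running $\text{TTDF.}\mathsf{Share}$, receives $(s_0,s_1)$ together with the revocation set (of size at most $r=t-1$), picks $b\leftarrow\{0,1\}$, samples $x\leftarrow\{0,1\}^l$, and forms the challenge $c^*=(c_1,c_2,\delta_{i_1},\dots,\delta_{i_{t-1}})$ with $c_1=\text{TTDF.}\mathsf{F}(ek,x)$, $c_2=\mathsf{hc}(x)\oplus s_b$ and $\delta_{i_j}=\text{TTDF.}\mathsf{F}^{-1}(sk_{i_j},c_1)$, finally giving $\mathcal{A}$ the challenge ciphertext together with the $t-1$ revoked shared secret keys. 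Then $\text{Game}_2$ is identical except that $c_2=\mathsf{hc}(x)\oplus s_b$ is replaced by $c_2=r\oplus s_b$ for a uniformly random string $r$; in $\text{Game}_2$ the value $c_2$ is independent of $b$, so $\Pr[\mathcal{A}^{\text{Game}_2}=b]=1/2$. Note that, unlike the TPKE experiment, the RPKE experiment gives $\mathcal{A}$ no decryption oracle in the second phase, so this proof is actually a bit simpler and needs no use of $\mathsf{CombineF}^{-1}$.

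Next I would bound $|\Pr[\mathcal{A}^{\text{Game}_1}=b]-\Pr[\mathcal{A}^{\text{Game}_2}=b]|$ by the advantage of a distinguisher $\mathcal{D}$ against the hardcore function $\mathsf{hc}$ of the TTDF, exactly as in the TPKE proof. On input $(ek,y,r)$ with $y=\text{TTDF.}\mathsf{F}(ek,x)$ and $r$ equal either to $\mathsf{hc}(x)$ or to a uniform string, $\mathcal{D}$ sets $pk=ek$, answers each $\mathsf{Reg}$ query of $\mathcal{A}$ by forwarding the identity to its own $\mathsf{Share}(mtd,\cdot)$ oracle and recording the returned trapdoors. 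When $\mathcal{A}$ commits to $(s_0,s_1)$ and the revocation set, $\mathcal{D}$ uses the (at most $t-1$) recorded trapdoors $td_{i_j}$ of the revoked identities to compute $\delta_{i_j}=\text{TTDF.}\mathsf{F}^{-1}(td_{i_j},y)$, chooses $b\leftarrow\{0,1\}$, puts $c_1=y$ and $c_2=r\oplus s_b$, and returns $c^*=(c_1,c_2,\delta_{i_1},\dots,\delta_{i_{t-1}})$ together with $td_{i_1},\dots,td_{i_{t-1}}$ to $\mathcal{A}$; finally $\mathcal{D}$ outputs $1$ iff $\mathcal{A}$'s guess equals $b$. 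When $r=\mathsf{hc}(x)$ this is a perfect simulation of $\text{Game}_1$ and when $r$ is uniform it is a perfect simulation of $\text{Game}_2$ (the distributions of $c_1$, the $\delta_{i_j}$ and the revoked keys are identical to the real ones), so $\mathcal{D}$'s distinguishing advantage equals the gap between the two games; since threshold one-wayness rules out any PPT $\mathcal{D}$ with non-negligible such advantage, the gap is $\mathsf{negl}(\lambda)$.

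Combining the two steps yields $\Pr[\mathcal{A}^{\text{Game}_1}=b]\le\Pr[\mathcal{A}^{\text{Game}_2}=b]+\mathsf{negl}(\lambda)=1/2+\mathsf{negl}(\lambda)$, which is the claimed IND-CPA security. The one point that needs care — and the only genuine difference from the TPKE argument — is $\mathcal{D}$'s query budget: the reduction must both hand $\mathcal{A}$ all of the revoked shared secret keys and use those same keys to build the $\delta_{i_j}$ components of $c^*$, while it may make at most $t-1$ calls to its own $\mathsf{Share}(mtd,\cdot)$ oracle. This is exactly consistent because the RPKE is $(n,t-1)$: an IND-CPA adversary learns at most $r=t-1$ shared secret keys in total (the revoked identities being among those registered via $\mathsf{Reg}$), so the trapdoors $\mathcal{D}$ obtains in the first phase already cover everything it must produce later, and $\mathcal{D}$ never exceeds its allowance.
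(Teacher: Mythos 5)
Your proof is correct and follows essentially the same two-game hybrid argument the paper uses: Game$_2$ replaces $\mathsf{hc}(x)$ with a random string, and a distinguisher against the hardcore function simulates either game by querying the $\mathsf{Share}(mtd,\cdot)$ oracle for the revoked identities and assembling the challenge ciphertext from $y$ and the resulting partial inversions. Your extra remark about the query budget (that the $t-1$ revoked keys $\mathcal{D}$ must both expose and use for the $\delta_{i_j}$'s are exactly the ones obtained through the at most $t-1$ oracle calls) is a valid and slightly more careful accounting than the paper makes explicit, but it does not change the substance of the reduction.
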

\begin{proof}
We define two hybrid experiments $\text{Game}_1$, $\text{Game}_2$.
\begin{itemize}
  \item $\text{Game}_1$: The game is identical to the IND-CPA experiment.
  At the beginning, the challenger runs
  $(pk,$ $msk)$ $\leftarrow \mathsf{Gen}(1^\lambda)$ and gives the
  $pk$ to the adversary $\mathcal{A}$,
  $\mathcal{A}$ can query the shared secret key oracle $\mathsf{Reg}(msk,\cdot)$ adaptively.
  Then the challenger runs the registration algorithm to answer $\mathcal{A}$.
  Upon receiving two session keys $s_0$, $s_1$ from $\mathcal{A}$, 
  the challenger chooses $b\in\{0,1\}$ at random and returns
  $c^*=\mathsf{Enc}(pk,sk_{i_1},\cdots,sk_{i_{t-1}},s_b)$ to $\mathcal{A}$.
  At the end of the game, $\mathcal{A}$ outputs $b'\in \{0,1\}$ as the guess of $b$.
  If $b'=b$, $\mathcal{A}$ wins this game, otherwise fails.

  \item $\text{Game}_2$: The game is identical to $\text{Game}_1$,
  except when the challenger generates the challenge ciphertext
  $c^*=(c_1,c_2,\delta_{i_1},\cdots,\delta_{i_{t-1}})$, it replaces
  $c_2=s_b\oplus \mathsf{hc}(x)$ with $c_2=s_b\oplus r$,
  where $r$ is a random string.
\end{itemize}
For $i\in \{1,2\}$, let $\Pr[\mathcal{A}^{\text{Game}_i}=b]$ be the probability
that $\mathcal{A}$ outputs the
bit $b$ when executed in $\text{Game}_i$. We claim that if there is an adversary $\mathcal{A}$
such that $\Pr[\mathcal{A}^{\text{Game}_1}=b]-
\Pr[\mathcal{A}^{\text{Game}_2}=b]$ is non-negligible, we can
construct a distinguisher $\mathcal{D}$ against the hardcore function.
On input $(ek,y,r)$,
where $ek$ is a function index,
$y=\mathsf{F}(ek,x)$ with $x\leftarrow\{0,1\}^l$ and $r$ is either $\mathsf{hc}(x)$ or
a random string, $\mathcal{D}$ works as follows:
\begin{enumerate}
  \item $\mathcal{D}$ runs $\mathcal{A}$ on input $pk=ek$. $\mathcal{D}$ can simulate the shared secret key oracle by querying the shared trapdoor oracle $\mathsf{Share}(mtd,\cdot)$ adaptively.

  \item Upon receiving two session keys $s_0$, $s_1$. $\mathcal{D}$ chooses $b\in \{0,1\}$ at random, let $c_1=y$, $c_2=s_b\oplus r$,
  computes $\delta_{i_j}=\mathsf{F}^{-1}(sk_{i_j},c_1),j=1,\cdots,t-1$,
  returns $c=(c_1,c_2,\delta_{i_1},\cdots,\delta_{i_{t-1}})$ to $\mathcal{A}$
  and gets a bit $b'$ output by $\mathcal{A}$.
  \item if $b=b'$, $\mathcal{D}$ returns ``1'' to denote $r$ is the output of the hardcore function,
  otherwise returns ``0'' to denote $r$ is a random string.
\end{enumerate}
The distinguisher $\mathcal{D}$ can give a perfect simulation of either 
$\text{Game}_1$ or $\text{Game}_2$.
The advantage of $\mathcal{D}$ is non-negligible, which is a contradiction of the threshold one-wayness.
Therefore,
$|\Pr[\mathcal{A}^{\text{Game}_1}=b]-\Pr[\mathcal{A}^{\text{Game}_2}=b]|\leq \mathsf{negl}(\lambda)$.

Finally, in $\text{Game}_2$ the output of hardcore function has been replaced with a random string, 
so $\Pr[\mathcal{A}^{\text{Game}_2}=b]=1/2$.
We have:
\begin{equation*}
\begin{aligned}
\Pr[\mathcal{A}^{\text{Game}_1}=b]&\leq|\Pr[\mathcal{A}^{\text{Game}_1}=b]-\Pr[\mathcal{A}^{\text{Game}_2}=b]|+\Pr[\mathcal{A}^{\text{Game}_2}=b]\\
&\leq \frac{1}{2}+\mathsf{negl}(\lambda)
\end{aligned}
\end{equation*}
Therefore, the RPKE is IND-CPA secure.
\end{proof}

\section{Threshold Lossy Trapdoor Function}\label{Threshold Lossy Trapdoor Function}
In this section, we  introduce a new cryptographic primitive called TLTDF which is a
threshold version of LTDF~\cite{PW08} and prove that TLTDF implies TTDF.

Let $l(\lambda)=\mathsf{poly}(\lambda)$ denote the input length of the function,
$k(\lambda)\leq l(\lambda)$ and $r(\lambda)=l(\lambda)-k(\lambda)$ denote
the lossiness and the residual leakage.
For notational convenience, we often omit the dependence on $\lambda$, and
define the sampling algorithm $\mathsf{Samp}_{\textnormal{inj}}(\cdot):=$$\mathsf{Samp}(\cdot,1)$ samples injective mode
and $\mathsf{Samp}_{\textnormal{loss}}(\cdot):=$$\mathsf{Samp}(\cdot,0)$ samples lossy mode.
\begin{definition}
A collection of $(n,t,l,k)$-TLTDFs is a tuple of polynomial-time algorithms
defined as follows. 
\begin{itemize}
  \item $\mathsf{Samp}_{\textnormal{inj}}(1^\lambda)\rightarrow (ek,mtd)$:
  The sampling algorithm is a probabilistic algorithm that 
  on input the security parameter $1^\lambda$, 
  outputs a function index $ek$ and a master trapdoor $mtd$.
  \item $\mathsf{Samp}_{\textnormal{loss}}(1^\lambda)\rightarrow (ek,mtd)$:
  The sampling algorithm is a probabilistic algorithm that 
  on input the security parameter $1^\lambda$, 
  outputs a function index $ek$ and a master trapdoor $mtd$.
  \item $\mathsf{Share}(mtd,id_i)\rightarrow td_i$:
  The sharing algorithm is a deterministic algorithm that 
  on input the master trapdoor $mtd$ and any identity $id_i,i\in[n]$,
  outputs the shared trapdoor $td_i,i\in[n]$ in both modes.
  \item $\mathsf{F}(ek,x)\rightarrow y$:
  On input the function index $ek$
  and $x\in \{0,1\}^l$, 
  the evaluation algorithm outputs $y$ in both modes,
  but the image has size at most $2^r=2^{l-k}$ in the lossy mode.
  \item $\mathsf{F}^{-1}(td_i,y)\rightarrow \delta_{i}$:
  On input any shared trapdoor $td_{i}$, $i\in[n]$
  and an image $y$, the partial inversion algorithm outputs an inversion share $\delta_i$.
  \item $\mathsf{CombineF}^{-1}(ek,x,\delta_{i_1},\cdots,\delta_{i_{t-1}},id_{i_t})\rightarrow \delta_{i_t}$:
  On input $ek$, $x\in \{0,1\}^l$,  
  any $t-1$ inversion shares $\delta_{i_1},\cdots,\delta_{i_{t-1}}$ of the image of $x$, and identity $id_{i_t}$, 
  the combining inversion algorithm outputs the inversion share $\delta_{i_t}$ of identity $id_{i_t}$.  
  \item $\mathsf{Combine}(\delta_{i_1},\cdots,\delta_{i_{t}},y)\rightarrow x$:
  On input any $t$ inversion shares $\delta_{i_j},j=1,\cdots,t$ and the image $y$, in injective mode
  the combining algorithm outputs $x$.
\end{itemize}
\end{definition}
Note that  
we require that the shared trapdoors in both modes have the same space,
and the behavior of the partial inversion algorithm and the combining algorithm
is unspecified, if a value $y$ is not in the image.

\begin{trivlist}
\item \textbf{Security.}
Let $\mathcal{A}$ be a PPT adversary against TLTDF and define its advantage function as

\begin{equation*}
\begin{aligned}
\textbf{Adv}^{\text{ind}}_{\text{TLTDF},\mathcal{A}}(\lambda)=\left|\Pr\left[b=b':\begin{array}{l}
                  b\leftarrow \{0,1\};\\
                  (ek,mtd)\leftarrow \mathsf{Samp}(1^\lambda,b);  \\
                   b'\leftarrow \mathcal{A}^{\mathsf{Share}(mtd,\cdot)}(ek)\\
                \end{array}
\right]-\frac{1}{2}\right|
\end{aligned}
\end{equation*}
Here, $\mathsf{Share}(mtd, \cdot)$ denotes a shared trapdoor oracle that given an input of any identity $id$, and outputs a shared trapdoor $td_{id}$. The adversary can query the oracle at most $t-1$ times adaptively.
A TLTDF is said to produce indistinguishable function indexes 
if $\textbf{Adv}^{\text{ind}}_{\text{TLTDF},\mathcal{A}}(\lambda)$ 
is negligible for all adversary.
\end{trivlist}

\begin{theorem}
If the sharing algorithm holds perfect privacy
and the injective and lossy modes of LTDF are indistinguishable,
then the TLTDF described above is also hard to distinguish injective from lossy.
\end{theorem}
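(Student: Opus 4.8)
The plan is to prove this by a reduction argument: suppose some PPT adversary $\mathcal{A}$ distinguishes the injective from the lossy mode of the TLTDF with non-negligible advantage while making at most $t-1$ adaptive queries to the $\mathsf{Share}(mtd,\cdot)$ oracle; I will build a distinguisher $\mathcal{B}$ against the (non-threshold) indistinguishability of LTDF's injective and lossy modes. The key observation is that in both modes of the TLTDF, the shared trapdoors are produced from the master trapdoor via Shamir-style secret sharing, and by the perfect privacy of the sharing scheme any $t-1$ shares are distributed as $t-1$ independently uniform field elements (more precisely, uniform in whatever space shared trapdoors live in), \emph{independently of the master trapdoor}, hence independently of which mode was chosen. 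This means $\mathcal{B}$ does not actually need $mtd$ to answer $\mathcal{A}$'s oracle queries.

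Concretely, $\mathcal{B}$ receives a function index $ek$ sampled in mode $b$ (injective or lossy) from its own LTDF challenger, and forwards $ek$ to $\mathcal{A}$. When $\mathcal{A}$ queries $\mathsf{Share}(mtd,\cdot)$ on an identity $id_i$, $\mathcal{B}$ simply samples a fresh uniform element from the shared-trapdoor space and returns it as $td_i$, recording the pair so that repeated queries on the same identity are answered consistently. Since $\mathcal{A}$ makes at most $t-1$ such queries, perfect privacy guarantees that this simulated view is \emph{identically distributed} to the real view $\mathcal{A}$ would see against a genuine TLTDF challenger in mode $b$. When $\mathcal{A}$ outputs its guess $b'$, $\mathcal{B}$ outputs the same bit. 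Then $\mathcal{B}$'s advantage against LTDF indistinguishability equals $\mathcal{A}$'s advantage against TLTDF indistinguishability, which by hypothesis on LTDF must be negligible; this yields the claim.

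The main subtlety to get right is the consistency of the simulated oracle and the precise sense in which ``the $t-1$ shares are independent of the mode.'' One must check that the Shamir sharing of $mtd$ is set up so that the constant term is exactly $mtd$ and the other coefficients are uniform and \emph{independent of $mtd$} in both modes — i.e., the same sharing distribution is used in injective and lossy modes (this is exactly the ``shared trapdoors in both modes have the same space'' requirement, together with the sharing being mode-oblivious). Given that, the distribution of any $\le t-1$ evaluations $p(id_{i_1}),\dots,p(id_{i_{t-1}})$ is the uniform distribution on $\mathbb{F}^{t-1}$ regardless of $mtd$, so a single hybrid step suffices and no information about the mode leaks through the oracle. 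I would also remark that the reduction is \emph{tight} and handles adaptivity trivially, precisely because $\mathcal{B}$ generates the shares itself on the fly rather than deriving them from a committed $mtd$ — this is the conceptual point behind adaptive security for TTDF noted earlier in the paper. Finally, $\mathcal{B}$ is clearly PPT whenever $\mathcal{A}$ is, so the contrapositive gives the theorem.
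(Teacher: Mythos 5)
Your proof is correct and rests on exactly the same idea as the paper's: by the perfect privacy of the $(n,t)$-sharing scheme, any $t-1$ shared trapdoors are uniform and independent of the master trapdoor, so they can be replaced by fresh uniform samples without changing the adversary's view, after which the mode switch reduces directly to LTDF indistinguishability. The paper packages this as an explicit four-game hybrid (real shares $\to$ random shares, injective $\to$ lossy, random shares $\to$ real shares) whereas you collapse it into a single reduction $\mathcal{B}$ that simulates the oracle with random shares; these are the same argument, and your version has the small virtue of making the consistency-on-repeated-queries bookkeeping explicit.
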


\begin{proof}
We define four hybrid experiments 
$\text{Game}_1$, $\text{Game}_2$, $\text{Game}_3$, $\text{Game}_4$.
\begin{itemize}
\item $\text{Game}_1$: The challenger runs
$(ek, mtd)\leftarrow \mathsf{Samp}_{\textnormal{inj}}(1^\lambda)$, and gives $ek$
to $\mathcal{A}$.
$\mathcal{A}$ can adaptively query the shared trapdoor oracle $\mathsf{Share}(mtd,\cdot)$ at most $t-1$ times.
Then the challenger runs the sharing algorithm to answer $\mathcal{A}$.

\item $\text{Game}_2$: The game is identical to $\text{Game}_1$,
except that the challenger generates the corrupted trapdoors
by choosing $r_i, i=1,\cdots,t-1$ at random in the shared trapdoor space
and then gives them to $\mathcal{A}$.

\item $\text{Game}_3$: The game is identical to $\text{Game}_2$,
except that the challenger runs $(ek, mtd)\leftarrow \mathsf{Samp}_{\textnormal{loss}}(1^\lambda)$
instead of running $(ek, mtd)\leftarrow \mathsf{Samp}_{\textnormal{inj}}(1^\lambda)$.

\item $\text{Game}_4$: The game is identical to $\text{Game}_3$,
except that the challenger runs the sharing algorithm
to generate the shared trapdoor and gives the shared trapdoors to $\mathcal{A}$.
\end{itemize}
The adversary's view is perfectly indistinguishable in $\text{Game}_1$ and 
$\text{Game}_2$ with the replacement of the shared trapdoors, since the sharing algorithm has
the perfect privacy.
Similarly, the adversary's view is perfectly indistinguishable in $\text{Game}_3$ 
and $\text{Game}_4$.
The only difference between $\text{Game}_2$ and 
$\text{Game}_3$ is the sampling algorithm. So
the adversary's view is computationally indistinguishable in 
$\text{Game}_2$ and $\text{Game}_3$,
the fact follows that the injective and lossy modes of LTDF are indistinguishable~\cite{PW08}.
Therefore, the adversary's view is computationally indistinguishable in 
$\text{Game}_1$ and $\text{Game}_4$
and the TLTDF described above is hard to distinguish injective from lossy, even if the adversary can obtains any $t-1$ shared trapdoors adaptively.
\end{proof}

\begin{theorem}
Let $\text{TLTDF}=(\mathsf{Samp}_{\textnormal{inj}}$, 
$\mathsf{Samp}_{\textnormal{loss}}$, $\mathsf{Share}$, 
$\mathsf{F}$, $\mathsf{F}^{-1}$, $\mathsf{CombineF}^{-1}$, $\mathsf{Combine})$
give a collection of $(n,t,l,k)$-TLTDFs with
$k=\omega(\log \lambda)$. Then
$\text{TTDF}=(\mathsf{Samp}_{\textnormal{inj}}$, $\mathsf{Share}$,
$\mathsf{F}$, $\mathsf{F}^{-1}$, $\mathsf{CombineF}^{-1}$, $\mathsf{Combine})$
give a collection of $(n,t)$-TTDFs.
\end{theorem}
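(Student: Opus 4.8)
The plan is to verify two things: that the injective-mode tuple $(\mathsf{Samp}_{\textnormal{inj}},\mathsf{Share},\mathsf{F},\mathsf{F}^{-1},\mathsf{CombineF}^{-1},\mathsf{Combine})$ meets the syntactic and correctness requirements of a $(n,t)$-TTDF, and that it is threshold one-way. Correctness is immediate: the interfaces listed for the six TLTDF algorithms are exactly those demanded of a TTDF, and the TLTDF correctness property in the injective mode states precisely that for $(ek,mtd)\leftarrow\mathsf{Samp}_{\textnormal{inj}}(1^\lambda)$, $td_i\leftarrow\mathsf{Share}(mtd,id_i)$, $x\leftarrow\{0,1\}^l$, $y=\mathsf{F}(ek,x)$, and any $t$ shared trapdoors, $\mathsf{Combine}(\mathsf{F}^{-1}(td_{i_1},y),\dots,\mathsf{F}^{-1}(td_{i_t},y),y)=x$, which is the TTDF correctness condition verbatim.

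For security I would reduce threshold one-wayness of the injective mode to the mode-indistinguishability established in the previous theorem. Suppose a PPT $\mathcal{A}$ wins the TTDF threshold one-wayness game with advantage $\varepsilon$; since that game runs $\mathsf{Samp}_{\textnormal{inj}}$, I build a distinguisher $\mathcal{D}$ for $\textbf{Adv}^{\text{ind}}_{\text{TLTDF},\mathcal{D}}$ as follows. On input a function index $ek$ with access to $\mathsf{Share}(mtd,\cdot)$, $\mathcal{D}$ samples $x\leftarrow\{0,1\}^l$, computes $y=\mathsf{F}(ek,x)$ (possible since $\mathsf{F}$ is public given $ek$), runs $\mathcal{A}(ek,y)$ while forwarding its at most $t-1$ shared-trapdoor queries to $\mathcal{D}$'s own oracle, receives $x'$, and outputs $1$ (``injective'') iff $x'=x$. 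When $ek$ comes from the injective mode, $\mathcal{D}$'s simulation is exactly the threshold one-wayness game, so $\Pr[\mathcal{D}=1]=\varepsilon$.

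The heart of the argument is bounding $\Pr[\mathcal{D}=1]$ when $ek$ comes from the lossy mode, where $\mathsf{F}(ek,\cdot)$ has image of size at most $2^{r}=2^{l-k}$. The point is that the answers of $\mathsf{Share}(mtd,\cdot)$ depend only on $mtd$ and the queried identities, hence are independent of the uniformly chosen $x$; writing $Z$ for $ek$ together with the (at most $t-1$) shared trapdoors given to $\mathcal{A}$, we have $\widetilde{H}_{\infty}(x\mid Z)=l$, and conditioned on any fixed value of $Z$ the quantity $y=\mathsf{F}(ek,x)$ takes at most $2^{r}$ values. By the conditional form of Lemma~1, $\widetilde{H}_{\infty}(x\mid (Z,y))\geq l-r=k$, so even an unbounded $\mathcal{A}$ outputs $x'=x$ with probability at most $2^{-k}$, i.e. $\Pr[\mathcal{D}=1]\le 2^{-k}$ in the lossy case. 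Combining the two cases, $|\varepsilon-2^{-k}|\le\textbf{Adv}^{\text{ind}}_{\text{TLTDF},\mathcal{D}}(\lambda)$, which is negligible by the mode-indistinguishability theorem; since $k=\omega(\log\lambda)$ makes $2^{-k}$ negligible, $\varepsilon$ is negligible as well, establishing threshold one-wayness.

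I expect the only subtle point to be the entropy bookkeeping in the lossy case: making precise that exposing up to $t-1$ shared trapdoors (and the index $ek$) leaks nothing about $x$, so that the residual-leakage bound $2^{l-k}$ on the image genuinely caps the attacker's guessing probability, and applying Lemma~1 in the form where $y$ has bounded range only after conditioning on $ek$. The remaining steps — matching interfaces for correctness and the hybrid/reduction for security — are routine.
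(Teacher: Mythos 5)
Your proposal matches the paper's proof: correctness is inherited verbatim from TLTDF's injective mode, and threshold one-wayness is reduced to mode-indistinguishability by a distinguisher that samples $x$, computes $y=\mathsf{F}(ek,x)$, forwards the at most $t-1$ trapdoor queries, and outputs ``injective'' iff the inverter recovers $x$; the lossy-mode bound is then obtained via the average min-entropy lemma, exactly as in the paper (the paper invokes the sharing algorithm's perfect privacy to argue the trapdoors do not reduce entropy, whereas you argue the same fact from independence of $x$ and $mtd$, which is the same point phrased slightly differently). The approach and all key steps coincide.
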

\begin{proof}
By definition, for any $id_i$, $i\in[n]$, 
$(ek$, $mtd)$$\leftarrow$$\mathsf{Samp}_{\textnormal{inj}}$$(1^\lambda)$,
$td_i\leftarrow \mathsf{Share}(mtd,id_i)$,
$x\leftarrow \{0,1\}^l$, $y=\mathsf{F}(ek,x)$, 
for any $t-1$ inversion shares 
$\delta_{i_1}=\mathsf{F}^{-1}(td_{i_1},y),\cdots, \delta_{i_{t-1}}=\mathsf{F}^{-1}(td_{i_{t-1}},y)$ 
and identity $id_{i_t}$, we have 
$\delta_{i_t}=\mathsf{F}^{-1}(td_{i_t},y)=\mathsf{CombineF}^{-1}(x$, $y$, 
$\delta_{i_1},\cdots,\delta_{i_{t-1}},id_{i_t}),$
and for any $t$ shared trapdoors $td_{i_1}$, $\cdots$, $td_{i_t}$, we have 
$x=\mathsf{Combine}(\mathsf{F}^{-1}(td_{i_1},y),\cdots,\mathsf{F}^{-1}(td_{i_t},y),y).$
Therefore, the correctness condition holds.
We prove that the function
also holds the threshold one-wayness:

Suppose $\mathcal{A}$
is a PPT inverter, if $\mathcal{A}$ can break the threshold one-wayness
with non-negligible probability, we can build
an adaptive distinguisher $\mathcal{D}$ between injective modes
and lossy ones.
$\mathcal{D}$ is given a function index $ek$ as input.
Its goal is to distinguish $ek$ is generated in the injective or lossy mode.
$\mathcal{D}$ works as follows:
\begin{enumerate}
  \item $\mathcal{D}$ runs inverter $\mathcal{A}$ on input the function index $ek$
  and gets identities output by $\mathcal{A}$.
  \item $\mathcal{D}$ chooses these identities to corrupt and
  obtains associated trapdoors,
  then $\mathcal{D}$ chooses $x\leftarrow \{0,1\}^l$, computes $y=\mathsf{F}(ek,x)$,
  gives the value $y$
  and the associated trapdoors to $\mathcal{A}$,
  and then obtains the value $x'$
  output by $\mathcal{A}$.
  \item if $x'=x$, $\mathcal{D}$ returns ``1'' to denote $ek$ is generated in the injective mode,
  otherwise returns ``0'' to denote $ek$ is generated in the lossy mode.
\end{enumerate}
First, by the assumption on $\mathcal{A}$,
if $ek$ is generated by $\mathsf{Samp}_{\textnormal{inj}}(1^\lambda)$,
we have $x'=x$ with
non-negligible probability and $\mathcal{D}$ outputs ``1''.
Suppose $ek$ is generated by $\mathsf{Samp}_{\textnormal{loss}}(1^\lambda)$.
The probability that
even an unbounded algorithm $\mathcal{A}$ predicts $x$ is given
by the average min-entropy of $x$ conditioned on $(ek,td_{i_1},\cdots,td_{i_{t-1}}$, $\mathsf{F}(ek,\cdot))$,
Because $\mathsf{F}(ek,\cdot)$ takes at most $2^{l-k}$ values, $ek$ and $x$ are independent.
By (\cite{PW08}, Lemma 2.1)
\begin{equation*}
\begin{aligned}
\widetilde{H}_{\infty}(x|ek,td_{i_1},\cdots,td_{i_{t-1}},\mathsf{F}(ek,x))&\geq\widetilde{H}_{\infty}(x|ek,td_{i_1},\cdots,td_{i_{t-1}})-(l-k)\\
&=l-(l-k)=k
\end{aligned}
\end{equation*}
where by the perfect privacy of the sharing algorithm, $td_{i_j},j=1,\cdots,t-1$ look like random numbers.
Since $k=\omega(\log \lambda)$,
the probability that $\mathcal{A}$ outputs $x$ and $\mathcal{D}$ outputs ``0''
is $\mathsf{negl}(\lambda)$. $\mathcal{D}$ distinguishes injective mode from lossy mode,
a contradiction of the hypothesis.
\end{proof}

\begin{trivlist}
\item \textbf{Remarks.}
In our applications of TPKE and RPKE, we use the pairwise independent hash function~\cite{WC81} as a hardcore function.
Let $H:\{0,1\}^l \rightarrow \{0,1\}^{l'}$ be a family of pairwise independent hash functions,
where $l'\leq k-2\lg(1/\epsilon)$ for some negligible $\epsilon=\mathsf{negl}(\lambda)$,
and we choose $\mathsf{hc}\leftarrow H$.
Following the Theorem 4,
$\widetilde{H}_{\infty}(x|ek,td_{i_1},\cdots,td_{i_{t-1}},\mathsf{F}(ek,x))\geq k$.
By the hypothesis that $l'\leq k-2\lg(1/\epsilon)$ and Lemma 2,
we have that $\mathsf{hc}(x)$ is $\epsilon$-close to uniform.
\end{trivlist}

\section{Instantiations of TLTDF}\label{Instantiations of TLTDF}

In this section, we give instantiations of TLTDF based on the DDH assumption and the LWE assumption.

\subsection{Instantiation of TLTDF Based on the DDH Assumption}\label{TLTDF Based on DDH}
By using the ElGamal-like encryption primitive in~\cite{PW08},
we generate a ciphertext $\mathbf{C}_1$ by encrypting the identity matrix $\mathbf{I}$ in the injective mode
and generate a ciphertext $\mathbf{C}_0$ by encrypting the all-zeros matrix $\mathbf{0}$ in the lossy mode.
\begin{lemma}(\cite{PW08}, Lemma 5.1).
The matrix encryption scheme produces indistinguishable ciphertexts under the DDH assumption.
\end{lemma}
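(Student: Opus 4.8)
Since this statement merely restates (\cite{PW08}, Lemma~5.1), the plan is to reproduce the Peikert--Waters hybrid argument. Recall that the matrix encryption index for a plaintext matrix $\mathbf{M}=(m_{ij})\in\{0,1\}^{l\times l}$ comprises column keys $g^{z_1},\dots,g^{z_l}$ with $z_j\leftarrow\mathbb{Z}_p$, row randomizers $g^{r_1},\dots,g^{r_l}$ with $r_i\leftarrow\mathbb{Z}_p$, and body entries $C_{ij}=g^{r_iz_j}\cdot g^{m_{ij}}$; the injective index $\mathbf{C}_1$ is the encryption of $\mathbf{M}=\mathbf{I}$ and the lossy index $\mathbf{C}_0$ is the encryption of $\mathbf{M}=\mathbf{0}$. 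The goal is to show that under DDH the ciphertext distributions for any two fixed plaintext matrices are computationally indistinguishable, which in particular yields $\mathbf{C}_0\approx\mathbf{C}_1$.

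First I would reduce to a single plaintext-independent target. Let $\mathcal{D}_{\mathrm{dum}}$ be the distribution that samples $g^{z_j}$ and $g^{r_i}$ honestly but replaces every body entry $C_{ij}$ with an independent uniform element of $\mathbb{G}$. Since $\mathcal{D}_{\mathrm{dum}}$ does not depend on $\mathbf{M}$, it suffices to prove that the encryption of an arbitrary $\mathbf{M}$ is computationally indistinguishable from $\mathcal{D}_{\mathrm{dum}}$; the claim for $\mathbf{C}_0$ versus $\mathbf{C}_1$ then follows from the triangle inequality.

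Then I would establish this indistinguishability by a hybrid over the $l$ rows: in hybrid $H_k$ the body entries of rows $1,\dots,k$ are uniform and those of rows $k+1,\dots,l$ are honest, so $H_0$ is the real encryption of $\mathbf{M}$ and $H_l=\mathcal{D}_{\mathrm{dum}}$. To bound the gap between $H_{k-1}$ and $H_k$ I would embed a vector-DDH instance $(g,g^{a_1},\dots,g^{a_l},g^{b},T_1,\dots,T_l)$, where either $T_j=g^{a_jb}$ for all $j$ or each $T_j$ is independently uniform, by setting $g^{z_j}:=g^{a_j}$, $g^{r_k}:=g^{b}$, and $C_{kj}:=T_j\cdot g^{m_{kj}}$; the rows $i>k$ are simulated honestly using the self-chosen exponents $r_i$ (so $C_{ij}=(g^{a_j})^{r_i}g^{m_{ij}}$ with randomizer $g^{r_i}$), while for $i<k$ the entries $C_{ij}$ are sampled uniformly. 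The real case reproduces $H_{k-1}$ and the random case reproduces $H_k$. Finally, vector DDH (the common-exponent variant with $l$ challenge components) reduces to ordinary DDH by the standard random self-reducibility hybrid over the components $(T_1,\dots,T_l)$.

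The only delicate point, and the one I would flag as the main obstacle, is this last reduction: because the column keys $g^{z_j}$ are reused across every row, one cannot switch a row with a single-instance DDH challenge, so the reduction must pass through the $l$-component vector form of DDH (which is itself DDH-hard), with the untouched rows handled by keeping their exponents $r_i$ in the clear. Everything else --- that the lossy index has image size at most $p$ and that the trapdoor $\{z_j\}$ inverts in the injective mode --- is orthogonal to this lemma and is already covered by the analysis of~\cite{PW08}.
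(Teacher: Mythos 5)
The paper imports this lemma directly from Peikert--Waters (\cite{PW08}, Lemma~5.1) without reproducing a proof, so there is no in-paper argument to compare against; your reconstruction---reducing both matrix encryptions to a plaintext-independent dummy distribution, hybridizing over the $l$ rows, and handling each row transition via the $l$-component vector form of DDH (itself DDH-hard by random self-reducibility)---is correct and is essentially the standard Peikert--Waters argument. You are also right that the reuse of the column keys $g^{s_j}$ across all rows is exactly why each row transition must pass through the vector form rather than a single DDH challenge.
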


\begin{trivlist}
\item\textbf{Construction.}
We now describe a DDH-based TLTDF as follows.
The identity space is given by $\mathbb{Z}_p\backslash \{0\}$.
\end{trivlist}

\begin{itemize}
  \item $\mathsf{Samp}_{\textnormal{inj}}$: On input $1^\lambda$, it chooses
  $(p, \mathbb{G}, g)\leftarrow$ $\mathsf{Gen}(1^\lambda)$,
  samples $r_i$, $s_i$, $b_{ij}\leftarrow \mathbb{Z}_p,i=1,\cdots,l$, $j=1,\cdots,t-1$ and computes
\begin{equation*}
\begin{aligned}
\mathbf{C}_1=\left(\begin{array}{cccc}
      g^{r_1} & g^{r_1s_1}g & \cdots & g^{r_1s_l} \\
      g^{r_2} & g^{r_2s_1} & \cdots & g^{r_2s_l} \\
       \vdots &     \vdots & \ddots & \vdots  \\
       g^{r_l}&  g^{r_ls_1}  & \cdots& g^{r_ls_l}g
    \end{array}\right)
\end{aligned}
\end{equation*}
  The function index is $ek=\mathbf{C}_1$
  and the master trapdoor is $mtd=((s_i),\mathbf{D}=(b_{ij}))$.

  \item $\mathsf{Samp}_{\textnormal{loss}}$:
  On input $1^\lambda$, it chooses $(p, \mathbb{G}, g)\leftarrow$ $\mathsf{Gen}(1^\lambda)$,
  samples $r_i$, $s_i$, $b_{ij}\leftarrow \mathbb{Z}_p,i=1,\cdots,l$, $j=1,\cdots,t-1$ and computes
\begin{equation*}
\begin{aligned}
\mathbf{C}_0=\left(\begin{array}{cccc}
      g^{r_1} & g^{r_1s_1} & \cdots & g^{r_1s_l} \\
      g^{r_2} & g^{r_2s_1} & \cdots & g^{r_2s_l} \\
       \vdots &     \vdots & \ddots & \vdots  \\
       g^{r_l}&  g^{r_ls_1}  & \cdots& g^{r_ls_l}
    \end{array}\right)
\end{aligned}
\end{equation*}
  The function index is $ek=\mathbf{C}_0$
  and the master trapdoor is $mtd=((s_i),\mathbf{D}=(b_{ij}))$.

  \item $\mathsf{Share}$: On input the master trapdoor
  $mtd$ and any identity $id_i,i=1,\cdots,n$,
  it sets 
$$
  f_j(x)=s_j+b_{j1}x+\cdots+b_{j(t-1)}x^{t-1},j\in[l].
$$
  and outputs
  $td_i^T=(f_1(id_i),\cdots, f_l(id_i))$.

  \item $\mathsf{F}$: On input a function index $ek$$=$$(c_{ij})_{l\times (l+1)}$
  and $\mathbf{x}\in \{0,1\}^l$, $\mathbf{x}=(x_1,\cdots,x_l)$, 
  it outputs
  $\mathbf{y}$$=$$(y_1$, $\cdots$, $y_{(l+1)})$, $y_i=c_{1i}^{x_1}c_{2i}^{x_2}\cdots c_{li}^{x_l},\ i=1,\cdots,l+1$.

  \item $\mathsf{F}^{-1}$: On input any shared trapdoor $td_i$ and the value $y_1$. It outputs
  $\delta_i^T=(y_1^{f_1(id_i)},\cdots,y_1^{f_l(id_i)})$.

  \item $\mathsf{CombineF}^{-1}$: On input $ek$, $\mathbf{x}\in \{0,1\}^l$, 
  any $t-1$ inversion shares $\delta_{i_j},j=1,\cdots,t-1$ 
  and identity $id_{i_t}$. Because of
  $f_{j}(id_{i_t})=\sum_{v=0}^{t-1}L_v f_{j}(id_{i_v})$, $j=1,\cdots,l$, where 
  $L_v, v=0,1,\cdots,t-1$
  are the Lagrangian coefficients which may be efficiently computed given
  $(id_{i_0}=0,id_{i_1},\cdots,id_{i_{t-1}})$, it computes $\mathbf{y}=\mathsf{F}(ek,\mathbf{x})$, and
  $y_1^{f_i(id_{i_0})}=y_1^{s_i}=y_{i+1}/x_i$,
  $y_1^{f_i(id_{i_t})}=\prod_{v=0}^{t-1}\left(y_1^{f_i(id_{i_v})}\right)^{L_v}$,
  $i=1,\cdots, l$, 
  and outputs $\delta_{i_t}^T=(y_1^{f_1(id_{i_t})}$, $\cdots$, $y_1^{f_l(id_{i_t})})$.

  \item $\mathsf{Combine}$: On input any $t$ inversion shares
  $\delta_{i_j},j=1,\cdots,t$ and the value
  $\mathbf{y}$. Because of
  $f_{j}(0)=\sum_{v=1}^{t}L_vf_{j}(id_{i_v})$, $j=1,\cdots,l$, where $L_v, v=1,\cdots,t$
  are the Lagrangian coefficients which may be efficiently computed given
  $(id_{i_1},\cdots,id_{i_t})$, it computes
  $y_1^{s_i}=\prod_{v=1}^t\left(y_1^{f_i(id_{i_v})}\right)^{L_v},i=1,\cdots,l$
  and outputs $\mathbf{x}=(x_1,\cdots,x_l)$, where $x_i=1$, if $y_{i+1}/y_1^{s_i}=g$, $i=1,\cdots,l$, and $x_i=0$, if $y_{i+1}/y_1^{s_i}=1$, $i=1,\cdots,l$.
\end{itemize}

\begin{lemma}
The algorithms give a collection of $(n,t,l,l-\lg p)$-TLTDFs under the DDH assumption.
\end{lemma}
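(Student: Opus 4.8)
The plan is to verify the four requirements of a $(n,t,l,k)$-TLTDF in turn: (i) that $\mathsf{F}$ in the lossy mode has image size at most $2^{l-k}$ with $k=\lg p$; (ii) that the injective and lossy function indexes are computationally indistinguishable even when $t-1$ shared trapdoors are adaptively exposed; (iii) correctness of partial inversion, combining-inversion and combining in the injective mode; (iv) that the shared trapdoors live in the same space in both modes. The bulk of the work is (i) and (iii); (ii) is almost immediate from the earlier lemmas.

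First I would treat the lossiness. In the lossy mode the matrix $\mathbf{C}_0$ is exactly an encryption of the all-zeros matrix under the ElGamal-style scheme, so every row $(g^{r_i}, g^{r_i s_1},\dots,g^{r_i s_l})$ is a scalar multiple (in the exponent) of the first coordinate; hence the map $\mathbf{x}\mapsto \mathbf{y}$ factors through the single group element $g^{\langle \mathbf{r},\mathbf{x}\rangle}\in\mathbb{G}$, because $y_{i+1}=g^{s_i\langle \mathbf{r},\mathbf{x}\rangle}$ is determined by $y_1=g^{\langle \mathbf{r},\mathbf{x}\rangle}$ and the public data. Thus the image has size at most $|\mathbb{G}|=p$, giving residual leakage $r=\lg p$ and lossiness $k=l-\lg p$, matching the statement. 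For (ii) I would invoke Lemma~7 ($\mathbf{C}_1$ and $\mathbf{C}_0$ are indistinguishable under DDH) together with Theorem~3: the shared trapdoors $td_{i_j}$ handed to the adversary are values $f_1(id_{i_j}),\dots,f_l(id_{i_j})$ of degree-$(t-1)$ polynomials whose constant terms are the $s_j$'s, so by perfect privacy of Shamir sharing any $t-1$ of them are distributed uniformly and independently of $(s_j)$, hence can be simulated from $ek$ alone; the reduction then passes through unchanged.

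Next I would check correctness in the injective mode. Here $\mathbf{C}_1$ encrypts $\mathbf{I}$, so $y_{i+1}=g^{s_i\langle\mathbf{r},\mathbf{x}\rangle}\cdot g^{x_i}=y_1^{s_i}\cdot g^{x_i}$; therefore once the combiner reconstructs $y_1^{s_i}$ it recovers $x_i$ by testing whether $y_{i+1}/y_1^{s_i}$ equals $g$ or $1$ (and $x_i\in\{0,1\}$ guarantees these are the only possibilities). It remains to show the Lagrange reconstruction in the exponent is valid: each $f_j$ has degree $t-1$, so for any $t$ identities $f_j(0)=\sum_{v=1}^t L_v f_j(id_{i_v})$, and raising $y_1$ to this identity gives $y_1^{s_j}=\prod_v (y_1^{f_j(id_{i_v})})^{L_v}=\prod_v (\delta_{i_v}[j])^{L_v}$, where the Lagrange coefficients $L_v\in\mathbb{Z}_p$ are computed over the field $\mathbb{Z}_p$ (note $|\mathbb{Z}_p|>n$ is needed here so the identities are distinct field elements, which is where the identity space $\mathbb{Z}_p\setminus\{0\}$ enters). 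The analogous computation with the node $0$ adjoined, using $f_j(id_{i_t})=\sum_{v=0}^{t-1}L_v f_j(id_{i_v})$ and the fact that $y_1^{f_j(0)}=y_1^{s_j}=y_{j+1}/g^{x_j}$ is computable from $ek$, $\mathbf{x}$ and $\mathbf{y}=\mathsf{F}(ek,\mathbf{x})$, justifies $\mathsf{CombineF}^{-1}$. Finally, (iv) holds by construction since in both modes $mtd=((s_i),\mathbf{D})$ has the same form and $\mathsf{Share}$ outputs vectors in $\mathbb{Z}_p^l$ regardless of mode.

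The main obstacle, such as it is, is bookkeeping rather than depth: one must be careful that the Lagrange interpolation is performed in $\mathbb{Z}_p$ (not over the integers), that the "$0$" node used for $\mathsf{CombineF}^{-1}$ and $\mathsf{Combine}$ is legitimately distinct from all $id_{i_v}\in\mathbb{Z}_p\setminus\{0\}$, and that exponentiating by $L_v\in\mathbb{Z}_p$ is well-defined because $\mathbb{G}$ has order $p$. Everything else is a direct unwinding of the ElGamal-matrix structure of $\mathbf{C}_b$ and of Shamir secret sharing applied coordinate-wise to the trapdoor vector.
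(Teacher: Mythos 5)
Your proposal is correct and follows the same route as the paper: lossiness from the observation that in the lossy mode the output is determined by $y_1 = g^{\langle\mathbf{r},\mathbf{x}\rangle}$ so the image has at most $p$ elements, and indistinguishability from the perfect privacy of Shamir sharing combined with the DDH-based indistinguishability of $\mathbf{C}_0$ and $\mathbf{C}_1$ (via the general Theorem~3). The paper's proof is much terser, deferring the indistinguishability argument and the correctness of Lagrange reconstruction in the exponent to the surrounding machinery; you simply spell those details out.
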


\begin{proof}
The $(n,t)$-threshold secret sharing scheme holds the perfect privacy. 
Both modes of LTDF are computationally indistinguishable.
Therefore, we can show the indistinguishability between
injective and lossy mode of TLTDF.

We transform the inversion algorithm
into threshold version
which does not change the lossy mode.
In the lossy mode, 
the number of possible function outputs is at most $p$, 
the residual leakage $r\leq \lg p$, 
and the lossiness is $k=n-r\geq l-\lg p$.
\end{proof}

\subsection{Instantiation of TLTDF Based on the LWE Assumption}\label{TLTDF Based on LWE}

We recall a variant of LWE-based symmetric key cryptosystem~\cite{PW08} which has a small message space.
Let $\mathbb{T}=\mathbb{R}/\mathbb{Z}$, $\eta\in\mathbb{N}$.
For every message $m\in \mathbb{Z}_p$, we define the ``offset'' $c_m=m/p\in \mathbb{T}$.
The secret key
is $\mathbf{z}\leftarrow \mathbb{Z}^d_q$.
To encrypt $m\in \mathbb{Z}_p$, we choose $\mathbf{a}\leftarrow \mathbb{Z}^d_q$ and an error term $e\leftarrow \chi$.
The ciphertext is
$$
E_\mathbf{z}(m, u; \mathbf{a}, e)
=(\mathbf{a}, \langle \mathbf{a}, \mathbf{z}\rangle + qc_m+u + e) \in \mathbb{Z}^d_q\times \mathbb{Z}_q
$$
where the rounding error $u=\lfloor qc_m\rceil-qc_m \in [-1/2, 1/2]$.
For a ciphertext $c=(\mathbf{a},c')$, the decryption algorithm computes
$t=\eta(c'-\langle \mathbf{a},\mathbf{z}\rangle)/q$ and
outputs $m\in \mathbb{Z}_p$, such that $t-\eta c_m$
is closest to $0$. Note that for any ciphertext,
as long as the absolute total error $|\eta e+\eta u|\leq \eta q/2p$, the decryption is correct.

We use ``matrix encryption'' mechanism in~\cite{PW08} to generate the ciphertext
$$
\mathbf{C}=E_\mathbf{Z}(\mathbf{M},\mathbf{U};\mathbf{A},\mathbf{E})
$$
where
$\mathbf{M}=(m_{i,j})\in \mathbb{Z}_p^{h\times w}$ is a message matrix,
$\mathbf{U}=(u_{i,j})$ is a matrix of rounding errors,
$\mathbf{E}=(e_{i,j})\in \mathbb{Z}_q^{h\times w}$ is error matrix, $e_{i,j}\leftarrow \chi$,
choose independent
$\mathbf{z}_j\leftarrow \mathbb{Z}_q^d$, $\mathbf{Z}=(\mathbf{z}_1,\cdots,\mathbf{z}_w)$,
for each row $i\in[h]$ of the random matrix $\mathbf{A}\in \mathbb{Z}_q^{h\times d}$,
choose independent $\mathbf{a}_i\leftarrow \mathbb{Z}_q^d$.

In the injective mode, the message matrix $\mathbf{M}$ is a matrix $\mathbf{B}$,
which is the tensor product
$\mathbf{I}\otimes \mathbf{b}$, where $\mathbf{I}\in \mathbb{Z}_p^{w\times w}$ is the identity and
$\mathbf{b}=(1,2\cdots,2^{l-1})^T\in \mathbb{Z}_p^l$, $l=\lfloor\log p\rfloor$, $w=h/l$.
In the lossy mode, the message matrix $\mathbf{M}$ is all-zeros matrix $\mathbf{0}$.

\begin{lemma}(\cite{PW08}, Lemma 6.2).
For $h,w=\mathsf{poly}(d)$, the matrix encryption scheme
produces indistinguishable ciphertexts under the assumption that LWE$_{q,\chi}$ is hard.
\end{lemma}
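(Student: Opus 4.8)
The plan is to recover this statement (Lemma~6.2 of Peikert--Waters) by a direct hybrid reduction to the multi-sample LWE assumption stated above, exploiting the fact that once the ``LWE part'' of the ciphertext matrix has been replaced by uniform randomness, the additive terms $q\mathbf{C_M}+\mathbf{U}$ (where $\mathbf{C_M}=(c_{m_{ij}})$ is the offset matrix) carry no information about the plaintext matrix $\mathbf{M}$, so the injective ($\mathbf{M}=\mathbf{B}$) and lossy ($\mathbf{M}=\mathbf{0}$) cases become literally the same distribution.

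First I would rewrite the ciphertext as $\mathbf{C}=(\mathbf{A},\ \mathbf{A}\mathbf{Z}+q\mathbf{C_M}+\mathbf{U}+\mathbf{E})$, where $\mathbf{A}\in\mathbb{Z}_q^{h\times d}$ has i.i.d.\ uniform rows $\mathbf{a}_i$, $\mathbf{Z}=(\mathbf{z}_1,\dots,\mathbf{z}_w)$ has i.i.d.\ uniform columns, $\mathbf{E}=(e_{ij})$ with $e_{ij}\leftarrow\chi$, and $\mathbf{U}$ is deterministic given $\mathbf{M}$. The key structural observation is that the $j$-th column of the second block is exactly $h$ LWE samples under the fresh secret $\mathbf{z}_j$, namely $\{(\mathbf{a}_i,\langle\mathbf{a}_i,\mathbf{z}_j\rangle+e_{ij})\}_{i\in[h]}$, shifted by the fixed vector $q(\mathbf{C_M})_{\cdot j}+\mathbf{U}_{\cdot j}$, and the $w$ columns are mutually independent but share the same $\mathbf{A}$.

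Next I would set up hybrids $H_0,\dots,H_w$, where $H_k$ replaces the first $k$ columns of the second block by freshly uniform vectors in $\mathbb{Z}_q^h$ and samples the remaining $w-k$ columns honestly. Distinguishing $H_{k-1}$ from $H_k$ is exactly distinguishing $h$ LWE samples (under secret $\mathbf{z}_k$) from $h$ uniform pairs: on a challenge $(\mathbf{A},\mathbf{v})\in\mathbb{Z}_q^{h\times d}\times\mathbb{Z}_q^h$, the reduction reuses this $\mathbf{A}$ for every column, plants $\mathbf{v}+q(\mathbf{C_M})_{\cdot k}+\mathbf{U}_{\cdot k}$ in column $k$, fills columns $j<k$ with uniform vectors and columns $j>k$ with $\mathbf{A}\mathbf{z}_j+q(\mathbf{C_M})_{\cdot j}+\mathbf{U}_{\cdot j}+\mathbf{E}_{\cdot j}$, and echoes the distinguisher's output. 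Since $h,w=\mathsf{poly}(d)$, summing over the $w$ steps gives $(\mathbf{A},\mathbf{A}\mathbf{Z}+\mathbf{E})\approx_c(\mathbf{A},\mathbf{R})$ with $\mathbf{R}$ uniform, with advantage at most $w$ times the $h$-sample LWE advantage; here adding the fixed $q\mathbf{C_M}+\mathbf{U}$ is absorbed since the honest columns already include it.

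Finally, in $H_w$ the second block is uniform and independent of $\mathbf{M}$, so it is the \emph{same} distribution for $\mathbf{M}=\mathbf{B}$ and $\mathbf{M}=\mathbf{0}$; chaining the two hybrid chains through this common endpoint yields indistinguishability of injective and lossy ciphertexts. I expect the only mildly delicate point to be bookkeeping-level: justifying that the reduction may use a single challenge matrix $\mathbf{A}$ across all $w$ columns (legitimate, because the scheme samples the $h$ rows $\mathbf{a}_i$ once and shares them), and noting that the rounding matrix $\mathbf{U}$ — which a priori looks like it could leak — is harmless because by the time it is added the column is already uniform. No step is genuinely hard beyond a careful hybrid argument; the substance is entirely in the structural observation that matrix encryption is a ``parallel repetition'' of single-shot LWE encryption whose randomness swamps the plaintext-dependent offset.
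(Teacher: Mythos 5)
The paper does not prove this lemma; it is imported verbatim from Peikert--Waters (Lemma~6.2 of~\cite{PW08}) and used as a black box, so there is no in-paper argument to compare against. Your reconstruction is, however, essentially the proof given in~\cite{PW08}: a column-by-column hybrid $H_0,\dots,H_w$ that replaces $\mathbf{A}\mathbf{z}_j+\mathbf{E}_{\cdot j}$ by a uniform vector one column at a time, with each step reducing to an $h$-sample LWE distinguisher under a fresh secret $\mathbf{z}_j$ while the challenge matrix $\mathbf{A}$ is reused across columns (this is the ``amortized'' or matrix form of LWE, which PW08 observe follows from single-secret LWE precisely by this hybrid), followed by the observation that the endpoint $H_w$ is independent of $\mathbf{M}$ so the two message-dependent chains meet. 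Your handling of the plaintext-dependent shift $q\mathbf{C}_{\mathbf{M}}+\mathbf{U}$ is also correct: it is a fixed public offset that is simply carried along, and once the underlying column is uniform it is information-theoretically absorbed. The only bookkeeping point worth making explicit, which you already flag, is that reusing one $\mathbf{A}$ across all $w$ reductions is legitimate because the scheme itself shares the rows $\mathbf{a}_i$; this is exactly what PW08's ``matrix LWE'' formulation packages up. In short, the proposal is correct and matches the cited source's proof.
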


\begin{trivlist}
\item\textbf{Construction.}
We describe a LWE-based TLTDF as follows. 
By using the 
technique of “clearing out the denominator” to bound the quantity of errors, 
we require that the identity space $ID=[n]$, $n\in \mathbb{N}$ and set $\eta=(n!)^3$.
\end{trivlist}

\begin{itemize}
  \item $\mathsf{Samp}_{\textnormal{inj}}$:
  On input $1^d$, it generates
$\mathbf{C}=E_\mathbf{Z}(\mathbf{B},\mathbf{U};\mathbf{A},\mathbf{E})$  
  and outputs the function index $\mathbf{C}$
  and the master trapdoor $mtd=(\mathbf{z}_i,\mathbf{D}_i)$,
  where $\mathbf{z}_i=(z_j^{(i)})$,  
  $\mathbf{D}_i=(b^{(i)}_{jk})$, $z_j^{(i)}, b^{(i)}_{jk}\leftarrow \mathbb{Z}_q$,  
  $i=1,\cdots,w,j=1,\cdots,d,k=1,\cdots,t-1$.
  \item $\mathsf{Samp}_{\textnormal{loss}}$:
  On input $1^d$, it generates
$\mathbf{C}=E_\mathbf{Z}(\mathbf{0},\mathbf{U};\mathbf{A},\mathbf{E})$  
  and outputs the function index $\mathbf{C}$
  and the master trapdoor $mtd=(\mathbf{z}_i,\mathbf{D}_i)$,
  where $\mathbf{z}_i=(z_j^{(i)})$,  
  $\mathbf{D}_i=(b^{(i)}_{jk})$, $z_j^{(i)}, b^{(i)}_{jk}\leftarrow \mathbb{Z}_q$,  
  $i=1,\cdots,w,j=1,\cdots,d,k=1,\cdots,t-1$.
  \item $\mathsf{Share}$:
  On input the master trapdoor $mtd$
  and any identity $id_{i_v}=i_v\in[n]$,
  it sets
  $f_j^{i}(x)=z_j^{(i)}+b^{(i)}_{j1}x+\cdots+b^{(i)}_{j(t-1)}x^{t-1},j=1,\cdots,d$

  and outputs

\begin{equation*}
\begin{aligned}
td_{i_v}=\left(\begin{array}{cccc}
                  f^{(1)}_1(i_v) & f^{(2)}_1(i_v) & \cdots & f^{(w)}_1(i_v) \\
               \vdots & \vdots & \ddots & \vdots \\
                  f^{(1)}_d(i_v) & f^{(2)}_d(i_v) & \cdots & f^{(w)}_d(i_v)
                \end{array}
  \right)
\end{aligned}
\end{equation*}

  \item $\mathsf{F}$: On input the function index $\mathbf{C}$
  and $\mathbf{x}\in \{0,1\}^h$,
  it outputs the vector $\mathbf{a}$$=\mathbf{x}\mathbf{A}$ and 
  $\mathbf{y}=\mathbf{x}\mathbf{C}$.
  \item $\mathsf{F}^{-1}$: On input any 
  shared trapdoor $td_{i_v}$ and $\mathbf{a}=\mathbf{x}\mathbf{A}$,
  it outputs the inversion share  
  $$\delta_{i_v}^{(i)}=\left\langle \mathbf{a}, \left(\begin{array}{c}
                                           f^{(i)}_1({i_v}) \\
                                           \vdots \\
                                           f^{(i)}_d({i_v})
                                         \end{array}
   \right)\right\rangle+e_{i_v}^{(i)}$$
   where $i=1,\cdots,w$.

  \item $\mathsf{CombineF}^{-1}$: 
  On input $ek$, $\mathbf{x}\in \{0,1\}^h$, 
  any $t-1$ inversion shares $\delta_{i_v},v=1,\cdots,t-1$ 
  and identity $id_{i_t}$.
  Because of
  $f^{(i)}_j(i_t)=\sum_{v=0}^{t-1}L_v f^{(i)}_j(i_v)$, $j=1,\cdots,d, i=1,\cdots,w$, where $L_v$, $v=0,1,\cdots, t-1$
  are Lagrangian coefficients which may be efficiently computed given
  $i_0=0,i_1,\cdots, i_{t-1}$,
  it computes the image $\mathbf{y}=(y_1,\cdots,y_w)$ of $\mathbf{x}$ and 
  $\mathbf{x}\mathbf{B}=(m_1,\cdots,m_w)$. For every $i=1,\cdots,w$, 
  $\delta_{i_0}^{(i)}=\langle \mathbf{a}, \mathbf{z}_i \rangle+e_i=y_i-\lfloor qc_{m_i}\rceil
  =\langle \mathbf{a}, \mathbf{z}_i \rangle+(\mathbf{xE})_i$.
    It outputs the inversion share $\delta_{i_t}=\sum_{v=0}^{t-1}L_v\delta_{i_v}$.

  \item $\mathsf{Combine}$: On input any $t$ inversion shares $\delta_{i_1},\cdots, \delta_{i_t}$.
  Because of
  $f^{(i)}_j(0)=\sum_{v=1}^{t}L_vf^{(i)}_j(i_v)$, $j=1,\cdots,d, i=1,\cdots,w$, where $L_v$, $v=1,\cdots, t$
  are Lagrangian coefficients which
  can be efficiently computed given any $t$ identities $i_1,\cdots, i_t$,
  it computes

\begin{equation*}
\begin{aligned}
L_v\delta_{i_v}^{(i)}=\left\langle \mathbf{a},L_v\left(\begin{array}{c}
                                          f^{(i)}_1(i_v) \\
                                          \vdots \\
                                          f^{(i)}_d(i_v)
                                        \end{array}
  \right)\right\rangle+L_ve_{i_v}^{(i)},
\end{aligned}
\end{equation*}

where $v=1,\cdots, t$.

  \begin{equation*}
  \begin{aligned}
  y_i'=\sum_{v=1}^t L_v\delta_{i_v}^{(i)}&=\left\langle\mathbf{a},\left(\begin{array}{c}
                           L_1f_1^{(i)}(i_1)+\cdots+L_tf_1^{(i)}(i_t) \\
                           \vdots \\
                           L_1f_d^{(i)}(i_1)+\cdots+L_tf_d^{(i)}(i_t)
                         \end{array}
  \right) \right\rangle+\sum_{v=1}^t L_ve_{i_v}^{(i)}\\
&=\left\langle \mathbf{a},\mathbf{z}_i\right\rangle+\sum_{v=1}^t L_ve_{i_v}^{(i)}
  \end{aligned}
  \end{equation*}

where $i=1,\cdots,w$ and gets $\mathbf{y'}=(y_1',\cdots,y_w')$,
then it computes $y_i''=\eta(y_i-y'_i)/q,i=1,\cdots,w$
and obtains $m_i\in \mathbb{Z}_p$ such that $y''-\eta c_{m_i}$ is 
closest to $0$.
Finally, it outputs $\mathbf{x}\in\{0,1\}^h$,
so that $\mathbf{x}\mathbf{B}=(m_1,\cdots,m_w)$.
\end{itemize}
We show correctness and lossy properties of our TLTDF as follows. 

We recall some probability distributions in~\cite{PW08}.
For $\alpha \in \mathbb{R}^+$, let $\Psi_\alpha$ be a normal variable with mean $0$ and
standard deviation $\frac{\alpha}{\sqrt{2\pi}}$ on $\mathbb{T}$.
For any probability $\phi: \mathbb{T}\rightarrow \mathbb{R}^+$ and $q\in \mathbb{Z}^+$,
let its discretization
$\bar{\phi}: \mathbb{Z}_q\rightarrow \mathbb{R}^+$ be the discrete distribution over $\mathbb{Z}_q$ of
the random variable $\lfloor q\cdot X_\phi\rceil \bmod q$, where $X_\phi$ is the distribution $\phi$.

\begin{lemma}
Let $q \geq 4p(h+\gamma)$, $\alpha \leq 1/(16p(h + g))$ for $g\geq \gamma^2$,
$\gamma=\sum^t_{v=1}\eta L_v$, where $L_v$, $v=1,\cdots,t$ is the Lagrangian coefficient.
The error matrix $\mathbf{E} = (e_{i,j})\in \mathbb{Z}_q^{h\times w}$ is generated by choosing independent
error terms $e_{i,j}\leftarrow \chi=\bar{\Psi}_\alpha$
and $e_{i_v}\leftarrow \chi$, $v=1,\cdots,t$.
Every entry of $\mathbf{xE}+\sum^t_{v=1}\eta L_ve_{i_v}$ has absolute value less than $q/4p$
for all $\mathbf{x} \in \{0, 1\}^h$,
except with probability at most $w\cdot2^{-g}$
over the choice of $\mathbf{E}$ and $e_{i_v}$.
\end{lemma}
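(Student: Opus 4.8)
The plan is to fix a single $\mathbf{x}\in\{0,1\}^h$ and a single column $j\in[w]$, prove a tail bound that the corresponding entry of $\mathbf{xE}+\sum_{v=1}^{t}\eta L_v e_{i_v}$ has absolute value $\ge q/(4p)$ only with probability $\le 2^{-(h+g)}$, and then take a union bound over all $2^h$ vectors $\mathbf{x}$ and all $w$ columns, giving $2^h\cdot w\cdot 2^{-(h+g)}=w\,2^{-g}$. What makes a $2^h$-term union bound affordable is that the per-term estimate will be a genuine Gaussian tail of the form $e^{-4\pi(h+g)}$, which is smaller than $2^{-(h+g)}$ precisely because $4\pi\ge\ln 2$.

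Fixing $\mathbf{x}$ and $j$, set $S:=\sum_{k=1}^{h}x_k e_{k,j}+\sum_{v=1}^{t}\eta L_v e_{i_v}^{(j)}$. I would use that $\chi=\bar\Psi_\alpha$ is by definition the discretization $X\mapsto\lfloor qX\rceil$ of a continuous $X\sim\Psi_\alpha$; representing each such $X$ on $[-1/2,1/2)$ and performing all the arithmetic over $\mathbb{Z}$, this writes $S=qW+R$, where $W=\sum_k x_k\hat e_{k,j}+\sum_v\eta L_v\hat e_{i_v}^{(j)}$ is a linear combination of independent $\Psi_\alpha$ variables and $R$ collects the rounding terms, so that $|R|\le\tfrac{1}{2}\big(h+\sum_v|\eta L_v|\big)=\tfrac{1}{2}(h+\gamma)\le q/(8p)$ by the hypothesis $q\ge 4p(h+\gamma)$. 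Hence it suffices to show $|W|<1/(8p)$ except with probability $\le 2^{-(h+g)}$: on that event $|S|\le q|W|+|R|<q/(8p)+q/(8p)=q/(4p)$, and since this integer has absolute value $<q/2$, it is the correct representative modulo $q$.

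To bound $W$: up to negligible statistical distance (the $\Psi_\alpha$-mass outside $[-1/2,1/2)$ is $e^{-\Omega(1/\alpha^2)}$, dominated by every other quantity in sight) $W$ is a real Gaussian $\Psi_\beta$ with $\beta^2=\alpha^2\big(\sum_k x_k^2+\sum_v(\eta L_v)^2\big)\le\alpha^2\big(h+\gamma^2\big)\le\alpha^2(h+g)$, using $x_k\in\{0,1\}$, $\sum_v(\eta L_v)^2\le\big(\sum_v|\eta L_v|\big)^2=\gamma^2$, and $g\ge\gamma^2$. By the hypothesis $\alpha\le 1/(16p(h+g))$ this gives $\beta\le 1/(16p\sqrt{h+g})$, so $(1/(8p))/\beta\ge 2\sqrt{h+g}$, and the standard Gaussian tail bound $\Pr_{X\sim\Psi_\beta}[|X|\ge a\beta]\le e^{-\pi a^2}$ (valid for $a\ge 1$) with $a=2\sqrt{h+g}$ yields $\Pr[|W|\ge 1/(8p)]\le e^{-4\pi(h+g)}\le 2^{-(h+g)}$, the last step being exactly $4\pi\ge\ln 2$.

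Combining with the union bound over $\mathbf{x}\in\{0,1\}^h$ and $j\in[w]$ completes the proof. The only genuine obstacle is the universal quantifier over $\mathbf{x}$: a naive $2^h$-fold union bound would be fatal against any merely polynomial tail, and the whole argument hinges on the observation that replacing $\chi$ by the continuous Gaussian produces a truly exponential tail $e^{-\Theta(h+g)}$ that still dominates $2^{-(h+g)}$. Everything else — lifting $\Psi_\alpha$ from $\mathbb{T}$ to $\mathbb{R}$, checking that the sum computed in $\mathbb{Z}$ does not wrap around modulo $q$, and reading $\gamma$ as $\sum_v|\eta L_v|$ so that both the rounding estimate and $\sum_v(\eta L_v)^2\le\gamma^2$ hold — is routine bookkeeping.
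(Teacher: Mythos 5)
Your proof follows essentially the same route as the paper's: decompose each entry into the continuous Gaussian part $q((\langle\mathbf{x},\mathbf{s}\rangle+\sum_v\eta L_v s_{i_v})\bmod 1)$ and a rounding remainder bounded by $\tfrac{1}{2}(h+\gamma)\le q/8p$, bound the Gaussian part at $2\sqrt{h+g}$ standard deviations using variance $\le(h+g)\alpha^2$, and union bound over all $2^h$ vectors $\mathbf{x}$ and $w$ columns. The only divergence is cosmetic: you keep the $\sqrt{2\pi}$ convention of $\Psi_\alpha$ and thus use the tail bound $e^{-\pi a^2}$ with $a=2\sqrt{h+g}$, whereas the paper silently normalizes $s_i$ to variance $\alpha^2$ and invokes $e^{-c^2/2}/c$; both yield a bound below $2^{-(h+g)}$. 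Your remark that $\gamma$ must be read as $\sum_v|\eta L_v|$ (so that both $|R|\le\tfrac12(h+\gamma)$ and $\sum_v(\eta L_v)^2\le\gamma^2$ hold, since Lagrangian coefficients can be negative) corrects a genuine slip in the lemma statement that the paper's own proof relies on without acknowledging.
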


\begin{proof}
By definition, $e_i=\lfloor qs_i \rceil \bmod q$, 
$e_{i_v}=\lfloor qs_{i_v} \rceil \bmod q$
where $s_i,s_{i_v}$ are independent normal variables with 
mean $0$ and variance $\alpha^2$ for each $i\in [h]$, $v\in[t]$. 
Let $s'=\langle \mathbf{x},\mathbf{e}\rangle +\sum^t_{v=1}\eta L_v e_{i_v}$, 
where $\mathbf{e}=(e_1,\cdots,e_h)^T$. 
Then $s'$ is at most $(h+\gamma)/2\leq q/8p$ away from 
$q((\langle \mathbf{x},\mathbf{s}\rangle+\sum^t_{v=1}\eta L_v s_{i_v})\bmod 1)$.

Since the $s_i,s_{i_v}$ are independent, 
$\langle \mathbf{x},\mathbf{s}\rangle+\sum^t_{v=1}\eta L_v s_{i_v}$ 
is distributed as a normal variable with mean $0$ 
and variance at most $(h+\gamma^2)\alpha^2\leq (h+g)\alpha^2$, 
where $\gamma^2>\sum^t_{v=1}(\eta L_v)^2$, 
hence a standard deviation of at most 
$(\sqrt{h+g})\alpha$. 
Then by the tail inequality on normal variables and the hypothesis on $\alpha$,

  \begin{equation*}
  \begin{aligned}
\Pr[|\langle \mathbf{x},\mathbf{s}\rangle+\sum^t_{v=1}\eta L_v s_{i_v}|\geq 1/8p]&\leq \Pr[|\langle \mathbf{x},\mathbf{s}\rangle+\sum^t_{v=1}\eta L_v s_{i_v}|\geq 2\sqrt{h+g}(\sqrt{h+g})\alpha]\\
&\leq \frac{\textnormal{exp}(-2(h+g))}{2\sqrt{h+g}}<2^{-(h+g)}.
  \end{aligned}
  \end{equation*}

We show that for any fixed $\mathbf{x}\in \{0,1\}^h$, 
$\Pr[|s'|\geq q/4p]\leq 2^{-(h+g)}$.
Taking a union bound over all $\mathbf{x}\in \{0,1\}^h$, 
we can conclude that $|s'|<q/4p$ for all $\mathbf{x}\in \{0,1\}^h$ except 
with probability at most $2^{-g}$.

Therefore, for each column $\mathbf{e}$ of $\mathbf{E}$ and $e_{i_v},v=1,\cdots,t$,  
$|s'|<q/4p$, for all $\mathbf{x}$ 
except with probability at most $2^{-g}$ over the choice of $\mathbf{e}$ and 
$e_{i_v},v=1,\cdots,t$. The lemma follows by a union bound over all $w$ columns of $\mathbf{E}$.
\end{proof}

\begin{trivlist}
\item \textbf{Parameters.}
Instantiate the parameters: let $p=h^{c_1}$ for constant $c_1>0$,
$h=d^{c_3}$ for constant $c_3>1$, $\gamma = \sum_{v=1}^t \eta L_v$,
where $L_v$ is the Lagrangian coefficient,
$e_{i_v}\leftarrow \chi, v=1,\cdots,t$,
let $\chi=\bar{\Psi}_\alpha$ where $\alpha\leq 1/(32ph)$
and let $q\in [2\sqrt{d}/\alpha, \mathcal{O}(ph^{c_2})]$ for constant $c_2>1$.

Note that 
for $\mathbf{A}\in \mathbb{Z}_q^{h\times d}$,
the size of the function index is $hd\log q=d^{c_3+1}\log q=\Omega(d^2\log d)$
and for $(\mathbf{x}\mathbf{A},\mathbf{x}\mathbf{C})\in \mathbb{Z}_q^d\times \mathbb{Z}_q^w$,
the size of the image is
$(h+w)\log q=(d^{c_3}+d^{c_3}/\lfloor\log p\rfloor)\log q
=\Omega(d\log d)$.
\end{trivlist}

\begin{trivlist}
\item\textbf{Correctness.}
We now show correctness of the above TLTDF by proving the following theorem.
\end{trivlist}

\begin{theorem}
The TLTDF with above parameters instantiated satisfies the correctness.
\end{theorem}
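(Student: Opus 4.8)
The correctness of the LWE-based TLTDF just defined has two facets --- that $\mathsf{CombineF}^{-1}$ reproduces the genuine inversion share $\mathsf{F}^{-1}(td_{i_t},\mathbf{a})$ from any $t-1$ other shares, and that $\mathsf{Combine}$ recovers the preimage $\mathbf{x}$ from any $t$ inversion shares --- and I would organize the argument around three facts: (i) the Lagrange interpolation carried out on the coordinate sharing polynomials $f_j^{(i)}$ is \emph{exact}; (ii) the noise accumulated in the combination stays small enough that the final rounding decodes every message slot $m_i$; and (iii) the packing map $\mathbf{x}\mapsto\mathbf{x}\mathbf{B}$ is injective. For (i): each $f_j^{(i)}(x)=z_j^{(i)}+\sum_{k=1}^{t-1}b_{jk}^{(i)}x^k$ has degree $t-1$ with constant term $z_j^{(i)}$, so for any $t$ distinct nodes $i_1,\dots,i_t\in[n]$ and the Lagrange coefficients $L_v$ for evaluation at $0$ one has $\sum_{v=1}^t L_v f_j^{(i)}(i_v)=f_j^{(i)}(0)=z_j^{(i)}$; pairing with $\mathbf{a}=\mathbf{x}\mathbf{A}$ and collecting the partial-inversion errors then gives $y'_i=\sum_v L_v\delta_{i_v}^{(i)}=\langle\mathbf{a},\mathbf{z}_i\rangle+\sum_v L_v e_{i_v}^{(i)}$. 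Since the $L_v$ are rationals, this is realized over $\mathbb{Z}_q$ only after ``clearing the denominator'': with $\eta=(n!)^3$ every $\eta L_v$ is an integer (by the Shamir-sharing facts and the bound $|(n!)^2L_v|\le(n!)^3$ recalled above), and the division by $q$ is deferred until the value $y''_i=\eta(y_i-y'_i)/q$ is formed. Adjoining the node $0$ to the node set yields the companion identity $f_j^{(i)}(i_t)=\sum_{v=0}^{t-1}L_v f_j^{(i)}(i_v)$, which is exactly what $\mathsf{CombineF}^{-1}$ uses, so it outputs the true share.

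For (ii), I would expand $y_i=(\mathbf{x}\mathbf{C})_i=\langle\mathbf{a},\mathbf{z}_i\rangle+\lfloor qc_{m_i}\rceil+(\mathbf{x}\mathbf{E})_i$ with $m_i=(\mathbf{x}\mathbf{B})_i$, absorbing the rounding matrix $\mathbf{U}$ into $\lfloor qc_{m_i}\rceil$ and folding the at-most-$\tfrac12\lfloor\log p\rfloor$ gap between $\sum_r x_r\lfloor qb_r/p\rceil$ and $\lfloor qm_i/p\rceil$ into the error. Subtracting the reconstructed $y'_i$ cancels the mask $\langle\mathbf{a},\mathbf{z}_i\rangle$, so $y''_i=\eta(y_i-y'_i)/q=\eta c_{m_i}+\nu_i/q$, where $\nu_i$, up to the negligible rounding terms, equals the quantity $(\mathbf{x}\mathbf{E})_i+\sum_{v=1}^t\eta L_v e_{i_v}^{(i)}$ bounded in the preceding error-bound lemma (the factor $\eta$ is the price of clearing denominators in the Lagrange step), and the underlying symmetric scheme returns the right $m_i$ whenever $|\nu_i|<q/(2p)$. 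Invoking that lemma with the instantiated parameters $p=h^{c_1}$, $h=d^{c_3}$, $\alpha\le 1/(32ph)$, $q\in[2\sqrt{d}/\alpha,\ \mathcal{O}(ph^{c_2})]$ and $\gamma=\sum_{v=1}^t\eta L_v$, I would check its two hypotheses: $q\ge 4p(h+\gamma)$ follows from $q\ge 2\sqrt{d}/\alpha\ge 64\sqrt{d}\,ph\ge 4p(h+\gamma)$, and $\alpha\le 1/(16p(h+g))$ with $g=\gamma^2$ follows from $\alpha\le 1/(32ph)$ together with $\gamma^2\le h$. The lemma then gives $|\nu_i|<q/(4p)<q/(2p)$ for every $i$, except with probability $\le w\cdot 2^{-g}=\mathsf{negl}(\lambda)$, so every $m_i$, $i=1,\dots,w$, is decoded correctly. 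For (iii), since $\mathbf{b}=(1,2,\dots,2^{l-1})^T$ with $l=\lfloor\log p\rfloor$, each length-$l$ block of $\mathbf{x}$ is the binary expansion of an integer in $\{0,\dots,2^l-1\}\subseteq\mathbb{Z}_p$, so $\mathbf{x}\mapsto\mathbf{x}\mathbf{B}=(m_1,\dots,m_w)$ is injective on $\{0,1\}^h$ and invertible blockwise; thus $\mathsf{Combine}$ recovers exactly the original $\mathbf{x}$, which together with (i) establishes the correctness condition.

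The crux is (ii): controlling the Lagrange-combined noise $\sum_v L_v e_{i_v}^{(i)}$ when identities range over all of $[n]$, since the Lagrange coefficients are rationals with potentially large numerators and denominators. This is precisely what ``clearing the denominator'' ($\eta=(n!)^3$) and the balancing in the error-bound lemma are built to handle; the delicate point is checking that the concrete $p,h,q,\alpha$ simultaneously meet that lemma's hypotheses and the decoding threshold $|\nu_i|<q/(2p)$ while keeping $q$ --- hence the sizes $hd\log q$ of the function index and $(h+w)\log q$ of the image --- within the claimed bounds. This forces the constraint $\gamma^2\le h$, i.e.\ (since $|\eta L_v|\le(n!)^4$ gives $\gamma\le t\,(n!)^4$) a bound tying the number of identities $n$ to the lattice dimension $d$, namely $n=\mathcal{O}(\log d/\log\log d)$.
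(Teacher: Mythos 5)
Your proof is correct and follows essentially the same route as the paper's: it reduces correctness to the noise-bound lemma (Lemma~7) invoked with $g=h$, checks that the Lagrange-combined error stays below the rounding threshold $\eta q/(2p)$, and uses exactness of Lagrange interpolation on the sharing polynomials. You are in fact somewhat more thorough than the paper, explicitly verifying the hypotheses $q\ge 4p(h+\gamma)$ and $\alpha\le 1/(16p(h+g))$ from the stated parameters (which the paper omits), addressing both $\mathsf{CombineF}^{-1}$ and $\mathsf{Combine}$, noting the injectivity of $\mathbf{x}\mapsto\mathbf{x}\mathbf{B}$, and surfacing the constraint $\gamma^2\le h$ that the paper silently assumes when it writes ``$g=h\ge\gamma^2$''; one small simplification is that $\gamma=\sum_v\eta L_v=\eta=(n!)^3$ exactly since the Lagrange coefficients at $0$ sum to $1$, so the cruder bound $\gamma\le t(n!)^4$ is unnecessary, but the resulting relationship between $n$ and $d$ is the same.
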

\begin{proof}
The combining algorithm computes $\mathbf{y'}=(y_1',\cdots,y_w')$ as follows:
$$
y_i'=\left\langle \mathbf{a},\mathbf{z}_i\right\rangle+\sum_{v=1}^t L_ve_{i_v}^{(i)}
=\left\langle \mathbf{a},\mathbf{z}_i\right\rangle+\sum_{v=1}^t L_ve_{i_v}^{(i)}.
$$

We have
\begin{displaymath}
  \begin{split}
y_i''
=\frac{|\eta (y_i-y_i')|}{q}=\frac{|\eta c_{m_i}q+\eta(\mathbf{x}\mathbf{U})_i+\eta(\mathbf{x}\mathbf{E})_i
-\sum_{v=1}^t \eta L_ve_{i_v}^{(i)}|}{q}.
  \end{split}
  \end{displaymath}
Let $g=h\geq \gamma^2$ in above Lemma 7, the absolute total error
\begin{displaymath}
  \begin{split}
|(\mathbf{x}\mathbf{U})_i+(\mathbf{x}\mathbf{E})_i
-\sum_{v=1}^t \eta L_ve_{i_v}^{(i)}|&\leq |(\mathbf{x}\mathbf{U})_i|+(|(\mathbf{x}\mathbf{E})_i|+
|\sum_{v=1}^t \eta L_ve_{i_v}^{(i)}|)\\
&\leq \frac{q}{8p}+\frac{q}{4p}<\frac{q}{2p}.
  \end{split}
  \end{displaymath}
We have
\begin{displaymath}
\begin{split}
|\eta(\mathbf{x}\mathbf{U})_i+\eta(\mathbf{x}\mathbf{E})_i
-\sum_{v=1}^t \eta L_ve_{i_v}^{(i)}|&\leq \eta|(\mathbf{x}\mathbf{U})_i|+(\eta|(\mathbf{x}\mathbf{E})_i|+
\eta|\sum_{v=1}^t \eta L_ve_{i_v}^{(i)}|)\\
&<\frac{\eta q}{2p}.
  \end{split}
  \end{displaymath}
Therefore, the inversion is correct.
\end{proof}

\begin{theorem}
The TLTDF with
above parameters produces indistinguishable function indexes
under the $LWE_{q,\chi}$ assumption.
Moreover, the algorithms give a collection of $(n,t,h,k)$-TLTDFs
under the $LWE_{q,\chi}$ assumption is hard. The residual leakage $r=h-k$ is
$$
r\leq \left(\frac{c_2}{c_1}+o(1)\right)\cdot h.
$$

\end{theorem}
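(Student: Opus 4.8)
The plan is to establish the theorem's three assertions in turn: (i) that the injective and lossy function indexes are computationally indistinguishable in the sense of $\mathbf{Adv}^{\text{ind}}_{\text{TLTDF},\mathcal{A}}$, (ii) that the tuple is a genuine collection of $(n,t,h,k)$-TLTDFs, and (iii) the residual-leakage bound $r\le(c_2/c_1+o(1))h$. For (i), I would reduce to Lemma 6 via the hybrid argument of Theorem 3. The sharing algorithm is Shamir's $(n,t)$-threshold scheme applied entrywise to $\mathbf{Z}$ (each coordinate being evaluated through a uniform degree-$(t-1)$ polynomial $f_j^{(i)}$ with constant term $z_j^{(i)}$), so any $t-1$ shared trapdoors $td_{i_1},\dots,td_{i_{t-1}}$ are distributed as independent uniform matrices over $\mathbb{Z}_q$, independently of $mtd$ and hence of the mode bit $b$. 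Replacing the genuine shared trapdoors by fresh uniform matrices -- exactly the perfect-privacy steps $\mathrm{Game}_1\to\mathrm{Game}_2$ and $\mathrm{Game}_3\to\mathrm{Game}_4$ of Theorem 3 -- alters $\mathcal{A}$'s view only perfectly, after which the games differ solely in whether $ek$ is $E_{\mathbf{Z}}(\mathbf{B},\mathbf{U};\mathbf{A},\mathbf{E})$ or $E_{\mathbf{Z}}(\mathbf{0},\mathbf{U};\mathbf{A},\mathbf{E})$; distinguishing these is precisely what Lemma 6 rules out under $LWE_{q,\chi}$.

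For (ii), correctness of the threshold inversion is Theorem 5: with $\eta=(n!)^3$ and the stated parameters, the ``clearing the denominator'' fact that $(n!)^2L_v\in\mathbb{Z}$ (Lemma 3), combined with Lemma 7 applied with $g=h\ge\gamma^2$, keeps the total error $|\eta(\mathbf{x}\mathbf{U})_i+\eta(\mathbf{x}\mathbf{E})_i-\sum_{v}\eta L_ve_{i_v}^{(i)}|$ below $\eta q/2p$, i.e. inside the decryption radius, so $\mathsf{Combine}$ first recovers $(m_1,\dots,m_w)$ and then the unique $\mathbf{x}\in\{0,1\}^h$ with $\mathbf{x}\mathbf{B}=(m_1,\dots,m_w)$ (uniqueness because $\mathbf{b}=(1,2,\dots,2^{l-1})^T$ over $\mathbb{Z}_p$ with $p>2^l$). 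The rest of the threshold structure -- that $t$ partial inversions determine the preimage and that $t-1$ partial inversions together with a preimage yield the inversion share of any further identity -- follows at once from Lagrange interpolation of the degree-$(t-1)$ polynomials $f_j^{(i)}$, as already carried out inside $\mathsf{CombineF}^{-1}$ and $\mathsf{Combine}$. Together with (i) this certifies a collection of $(n,t,h,k)$-TLTDFs.

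For (iii), I would bound the lossy image. In the lossy mode $\mathbf{M}=\mathbf{0}$, hence also $\mathbf{U}=\mathbf{0}$ and $\mathbf{C}=(\mathbf{A}\mid\mathbf{A}\mathbf{Z}+\mathbf{E})$, so $\mathsf{F}(\mathbf{C},\mathbf{x})=(\mathbf{x}\mathbf{A},(\mathbf{x}\mathbf{A})\mathbf{Z}+\mathbf{x}\mathbf{E})$ is a function of $\mathbf{x}\mathbf{A}\in\mathbb{Z}_q^d$ and $\mathbf{x}\mathbf{E}\in\mathbb{Z}_q^w$ alone. There are at most $q^d$ values of $\mathbf{x}\mathbf{A}$; and by the same Gaussian tail estimate used to prove Lemma 7, with overwhelming probability over $\mathbf{E}$ every coordinate of $\mathbf{x}\mathbf{E}$ has absolute value $O(q/p)$ for all $\mathbf{x}\in\{0,1\}^h$ at once, so there are at most $(O(q/p))^w$ values of $\mathbf{x}\mathbf{E}$. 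Hence the lossy image has size at most $q^d\cdot(O(q/p))^w$ and $r\le d\log q+w(\log q-\log p+O(1))$. Plugging in the parameters, $q=O(ph^{c_2})$ gives $\log q=O(\log d)$ and $\log q-\log p\le c_2\log h+O(1)$; since $h=d^{c_3}$ with $c_3>1$, $d\log q=O(d\log d)=o(h)$; and $w=h/\lfloor\log p\rfloor$ with $\log p=c_1\log h+O(1)$ gives $w(c_2\log h+O(1))=(c_2/c_1+o(1))h$. Summing, $r\le o(h)+(c_2/c_1+o(1))h=(c_2/c_1+o(1))h$, as claimed.

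The hard part will be (iii): putting the lossy-image count into the right form -- recognising that $\mathbf{x}\mathbf{A}$ already pins down the linear contribution and only the bounded noise $\mathbf{x}\mathbf{E}$ remains free, rather than crudely bounding by $q^{d+w}$ -- and then tracking the lower-order terms carefully enough that the leading constant comes out to be exactly $c_2/c_1$; in particular one must confirm that $d\log q$, the rounding in $\lfloor\log p\rfloor$, and the constant hidden in $q=O(ph^{c_2})$ are all swallowed by $o(h)$. Along the way one should also check that the admissible range $q\in[2\sqrt d/\alpha,\,O(ph^{c_2})]$ is non-empty for all large $d$, which holds because $\alpha\le1/(32ph)$ forces $2\sqrt d/\alpha\le64ph\sqrt d$ while $c_2>1$ together with $h=d^{c_3}$, $c_3>1$ makes $O(ph^{c_2})$ eventually dominate.
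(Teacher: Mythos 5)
Your proposal is correct and follows essentially the same route as the paper: indistinguishability reduces to Lemma~6 through the hybrid argument of Theorem~3 (replacing the genuine trapdoor shares by uniform ones using perfect privacy of Shamir sharing), and the lossy-image count is bounded by $q^d(q/p)^w$, from which the residual-leakage bound $r\le(c_2/c_1+o(1))h$ follows as in Theorem~6.4 of Peikert--Waters. The paper's proof is terser, deferring the leakage arithmetic to that reference, so your explicit computation (and the observation that in lossy mode the image is determined by $\mathbf{x}\mathbf{A}$ together with the bounded $\mathbf{x}\mathbf{E}$, the inversion-time errors playing no role in the image count) cleanly fills in the details the paper omits.
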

\begin{proof}
The $(n,t)$-threshold secret sharing scheme holds the perfect privacy 
and the injective and lossy modes of LTDF are indistinguishable~\cite{PW08}.
Therefore, we can show the indistinguishability between
injective and lossy mode of TLTDF.

We transform the inversion algorithm into
threshold version which does not change the lossy mode.
In the lossy mode, as in the correctness argument, 
$|(\mathbf{x}\mathbf{U})_i|+(|(\mathbf{x}\mathbf{E})_i|+
|\sum_{v=1}^t \eta L_ve_{i_v}^{(i)}|)<\frac{q}{2p}$. 
Therefore, for $i\in[w]$, the function output 
$y_i=\langle \mathbf{xA},\mathbf{z}_i \rangle+|(\mathbf{x}\mathbf{U})_i|+0+(|(\mathbf{x}\mathbf{E})_i|+
|\sum_{v=1}^t \eta L_ve_{i_v}^{(i)}|)$ can take at most $q/p$ possible values.
Then the number of possible function outputs is at most $q^d(q/p)^w$.
The proof follows (\cite{PW08}, Theorem 6.4), we omit the details.
\end{proof}

\section{Threshold Trapdoor Relation}\label{Threshold Trapdoor Relation}

We show a relaxation of TTDF called TTDR and prove that TTDR maintains same
applications of constructing TPKE and RPKE.
\begin{definition}
A collection of $(n,t)$-TTDRs is a tuple of polynomial-time algorithms
as follows:
\begin{itemize}
  \item $\mathsf{Gen}(1^\lambda)\rightarrow (ek,mtd)$: 
  The generation algorithm is a probabilistic algorithm that 
  on input the security parameter $1^\lambda$,
  outputs a function index $ek$ and
  a master trapdoor $mtd$.

  \item $\mathsf{Share}(mtd,id_i)\rightarrow td_i$:
  The sharing algorithm is a deterministic algorithm that 
  on input the master trapdoor $mtd$ and any identity $id_i,i\in[n]$,
  outputs the shared trapdoor $td_i,i\in[n]$.

  \item $\mathsf{Samp}(ek)\rightarrow (x,y)$:
  On input the function index $ek$,
  the relation sampling algorithm samples a relation $(x,y=\mathsf{F}(ek,x))$.

  \item $\mathsf{F}^{-1}(td_i,y)\rightarrow \delta_{i}$:
  On input any shared trapdoor $td_{i}$
  and an image $y$,
  the partial inversion algorithm outputs the inversion share $\delta_{i}$.

  \item $\mathsf{CombineF}^{-1}(x,y,\delta_{i_1},\cdots,\delta_{i_{t-1}},id_{i_t})\rightarrow \delta_{i_t}$:
  On input $x\in \{0,1\}^l$, its image $y$,  
  any $t-1$ inversion shares $\delta_{i_1},\cdots,\delta_{i_{t-1}}$, and identity $id_{i_t}$, 
  the combining inversion algorithm outputs the inversion share $\delta_{i_t}$ of identity $id_{i_t}$.

  \item $\mathsf{Combine}(\delta_{i_1},\cdots,\delta_{i_{t}},y)\rightarrow x$:
  On input any $t$ inversion shares $\delta_{i_j},j=1,\cdots,t$ and the image $y$,
  the combining algorithm outputs $x$.
\end{itemize}
\end{definition}
we require that in the partial inversion algorithm, 
the combining inversion algorithm and the combining algorithm,
the behavior of the algorithms is unspecified,
if the $y$ is not in the image.

\begin{trivlist}
\item \textbf{Correctness.} For any $id_i, i\in[n]$, $(ek, mtd)\leftarrow \mathsf{Gen}(1^\lambda)$, $td_i\leftarrow \mathsf{Share}(mtd,id_i)$,
any relation $(x,y=\mathsf{F}(ek,x))$,
we require that 
for any $t$ shared trapdoors $td_{i_1}$, $\cdots$, $td_{i_t}$, we have 
$$
x=\mathsf{Combine}(\mathsf{F}^{-1}(td_{i_1},y),\cdots,\mathsf{F}^{-1}(td_{i_t},y),y).
$$

\end{trivlist}

\begin{trivlist}
\item \textbf{Security.}
Let $\mathcal{A}$ be a PPT adversary 
and define its advantage function $\textbf{Adv}^{\text{tow}}_{\text{TTDR},\mathcal{A}}(\lambda)$ as
\end{trivlist}

\begin{displaymath}
  \begin{split}
\textbf{Adv}^{\text{tow}}_{\text{TTDR},\mathcal{A}}(\lambda)=\Pr\left[x=x':\begin{array}{l}
                  (ek,mtd)\leftarrow \mathsf{Gen}(1^\lambda);  \\
                  (x,y)\leftarrow \mathsf{Samp}(ek); \\
                   x'\leftarrow \mathcal{A}^{\textsf{Share}(mtd, \cdot)}(ek,y)\\
                \end{array}
\right]
  \end{split}
  \end{displaymath}
Here, $\mathsf{Share}(mtd, \cdot)$ denotes a shared trapdoor oracle that given an input of any identity $id$, and outputs a shared trapdoor $td_{id}$. The adversary can query the oracle at most $t-1$ times adaptively.
A $(n,t)$-TTDR is threshold one-way if
for any PPT adversary the advantage function is negligible.

Following the constructions of TPKE and RPKE from TTDF,
we can
show generic constructions of TPKE and RPKE from TTDR
by running the relation sampling algorithm of TTDR instead of the evaluation algorithm of TTDF in the encryption algorithm.
The threshold one-wayness ensures that both of TPKE and RPKE are IND-CPA secure.

\begin{trivlist}
\item \textbf{Threshold Lossy Trapdoor Relation.}
Following the definitions of TTDR and TLTDF,
by relaxing the evaluation algorithm of TLTDF into relation sampling algorithm, we present the
definition of TLTDR and show that TLTDR also produces
indistinguishable function indexes.
Similarly, we can prove TLTDR implies TTDR.

We propose a refined definition of the relation by omitting the public computable injective map
in LTDR~\cite{XLLL14}.
Informally, the function index $ek$ is a composite function description which consists of the inverse map
of the public computable injective map.
The relation sampling algorithm
outputs a relation $(x,y=\mathsf{F}(ek,x))$. The inversion algorithm
takes in the trapdoor and the image $y=\mathsf{F}(ek,x)$, outputs $x$.
\end{trivlist}

\begin{trivlist}
\item \textbf{Instantiations of TLTDR.}
Following the instantiation of TLTDF under the DDH assumption and the instantiation of LTDR~\cite{XLLL14},
we give an efficient instantiation under the DDH
assumption by relaxing evaluation algorithm into relation sampling algorithm.
We constructs the TLTDR by using $2\times 3$ matrix encryption.
The function indexes in the injective mode and in the lossy mode of TLTDR are
\begin{displaymath}
  \begin{split}
 \mathbf{C}_0&=\left(\begin{array}{c@{\ \ }c@{\ \ }c}
         g^{r_1} & g^{r_1s_1} & g^{r_1s_2} \\
         g^{r_2} & g^{r_2s_1} & g^{r_2s_2}
       \end{array}\right),
\mathbf{C}_1=\left(\begin{array}{c@{\ \ }c@{\ \ }c}
         g^{r_1} & g^{r_1s_1}g & g^{r_1s_2} \\
         g^{r_2} & g^{r_2s_1} & g^{r_2s_2}g
       \end{array}\right)
  \end{split}
  \end{displaymath}

For $\mathbf{C}=(c_{ij})_{2\times 3}$ and $(x_1,x_2)\leftarrow \mathbb{Z}_p^2$,
the relation sampling algorithm outputs a relation
$(\mathbf{x}=(g^{x_1}$, $g^{x_2})$, $\mathsf{F}(ek,\mathbf{x})=$$(c_{11}^{x_1}c_{21}^{x_2}$,
$c_{12}^{x_1}c_{22}^{x_2}$, $c_{13}^{x_1}c_{23}^{x_2}))$.
The combining algorithm
computes $\mathbf{x}$ by taking as input any $t$ inversion shares and the image $\mathsf{F}(ek,\mathbf{x})$.
It is not hard to show a collection of
$(n,t,2\log p,\log p)$-TLTDRs under the DDH assumption.
\end{trivlist}

\begin{table*}
\begin{center}
\caption{Comparisons Among TPKE Schemes}\label{table1}
\setlength{\tabcolsep}{0.4mm}{
\begin{tabular}{ccccccccc}
\hline
\multirow{2}{*}{Scheme} &\multirow{2}{*}{pk size}
&\multirow{2}{*}{ciphertext size} &\multirow{2}{*}{Enc} &\multirow{2}{*}{Dec} &\multirow{2}{*}{assumption}&adaptive &generic&\multirow{2}{*}{IND-CCA}\\
&& & & &&corruption&construction &\\
\hline

BD10 &$d^5$ &$d^2$&1Mvp &1SS+1Mvp &LWE&$\times$ &$\times$ &$\times$ \\
XXZ11 &$2nd^3\log d$ &$2nd^2\log d$&1SS+1OTS+$n$TBE&$1\text{Inv}_\text{LTDF}$ &LWE&$\times$&$\surd$&$\surd$\\
BKP13 &$d^2\log d$  &$d\log d$  &$1\text{OTS}+2\text{Mvp}$ &$1\text{Inv}_\text{PSF}$  &LWE&$\surd$&$\times$ &$\surd$ \\
Ours$_1$ & $d^2\log d$ & $d\log d$ &$1\text{Mvp}$ &$1\text{Mvp}$ &LWE&$\surd$ &$\surd$& $\times$\\
Ours$_2$ &$l^2|\mathbb{G}| $ & $l|\mathbb{G}| $&$l^2\text{Exp}$ & $1\text{Exp}$ &DDH& $\surd$&$\surd$ & $\times$\\
Ours$_3$ &$6|\mathbb{G}| $ &$3|\mathbb{G}| $&$6\text{Exp}$ &$1\text{Exp}$ &DDH&$\surd$&$\surd$ &$\times$\\
\hline
\end{tabular}}
\end{center}
$^\ddagger$ $n$ and $d$ denotes the number of users and the dimension of lattice
respectively. Mvp, SS, OTS, TBE, $\text{Inv}_\text{LTDF}$,
$\text{Inv}_\text{PSF}$ and $\text{Exp}$
denote the cost of a matrix-vector product,
a secret sharing, a one-time signature, a tag-based encryption, inverting an image of LTDF,
sampling a preimage of preimage sampleable function and a modular exponentiation respectively.
We construct the scheme of Ours$_1$ and Ours$_2$ from TTDF where $l$ denotes input length of function
and the scheme of Ours$_3$ from TTDR.
\end{table*}

\begin{table*}
\begin{center}
\caption{Comparisons Among RPKE Schemes}\label{table2}
\setlength{\tabcolsep}{2.0mm}{
\begin{tabular}{ccccccc}
\hline
\multirow{2}{*}{Scheme} &\multirow{2}{*}{pk size}&\multirow{2}{*}{ciphertext size}& \multirow{2}{*}{assumption}
&adaptive &generic&\multirow{2}{*}{IND-CCA} \\
& & &&corruption&construction &\\
\hline

NP00 &$|\mathbb{G}|$ &$t|\mathbb{G}|$&DDH & $\surd$& $\times$ & $\times$ \\
DF03 &$(t+2)|\mathbb{G}|$&$2(t+1)|\mathbb{G}|$&DDH & $\surd$ & $\times$& $\surd$ \\
Wee11 &$t|\mathbb{G}|$&$(t+2)|\mathbb{G}|$&DDH & $\times$& $\surd$ & $\surd$\\
Wee11 &$t|\mathbb{Z}^*_N|$&$(t+2)|\mathbb{Z}^*_N|$
&factoring & $\times$ & $\surd$& $\surd$\\
Ours$_3$ &$6|\mathbb{G}|$&$(2t+3)|\mathbb{G}|$& DDH & $\surd$ & $\surd$& $\times$ \\
Ours$_2$ &$l^2|\mathbb{G}|$& $(2t+l)|\mathbb{G}|$&DDH & $\surd$ & $\surd$& $\times$ \\
Ours$_1$ &$d^2\log d$&$d\log d$&LWE & $\surd$ & $\surd$& $\times$ \\
\hline
\end{tabular}}
\end{center}
$^\ddagger$ $t$ and $d$ denote the threshold value
and the dimension of lattice, respectively.
We construct the scheme of Ours$_1$ and Ours$_2$ from TTDF where $l$ denotes input length of function
and the scheme of Ours$_3$ from TTDR.
\end{table*}

\section{Performance Evaluations}
\subsection{Theoretical Analysis}
\begin{trivlist}
\item\textbf{Communication Cost and Efficiency Comparison.}
Table \ref{table1} compares the
communication costs and computational costs
of our lattice-based TPKE schemes with that in~\cite{BD10},~\cite{XXZ11},~\cite{BKP13}.
For lattice-based TPKE, the communication cost of our scheme is less than~\cite{BD10},
in which they need to use a large modulus which causes larger ciphertexts. Compared with~\cite{XXZ11},
they split the message into many pieces and encrypt every piece by a different tag-based encryption,
that cause the size of the public key and the ciphertext is at least linear in the number of users,
while our scheme splits the master secret key directly
and shows the size of the public key and the ciphertext is independent of the number of users.
What's more, the computational
cost of our TPKE is also less than~\cite{BD10},~\cite{XXZ11},~\cite{BKP13},
especially during the encryption and decryption phase,
our TPKE scheme
only requires to compute a simple matrix-vector product respectively.
However, in~\cite{BD10}, the decryption algorithm requires every user computes a sharing by a pseudorandom secret sharing and
a matrix-vector product.
In~\cite{XXZ11}, the encryption algorithm needs to run a secret sharing scheme
to split a message into $n$ pieces, $n$ times tag-based encryption to encrypt every piece and a one-time signature.
Moreover, the decryption algorithm require to check the signature and invert an image of lossy
trapdoor function to obtain a decryption share.
Compared with~\cite{BKP13}, their encryption algorithm requires every user runs a one-time signature and compute twice
matrix-vector product, and the decryption algorithm needs to 
run the inversion algorithm of preimage sampleable function~\cite{GPV08}.
\end{trivlist}

Table \ref{table2} compares the communication costs of our RPKE schemes with that 
in~\cite{NP00},~\cite{DF03},~\cite{Wee11}.
The size of the public key of our DDH-based RPKE
is a $2\times 3$ matrix which is less than
~\cite{DF03} and~\cite{Wee11} in which the size of the public key is at least linear with the 
revocation threshold value. Besides, our RPKE scheme is based on the LWE assumption, 
while the existing RPKE schemes~\cite{NP00},~\cite{DF03},~\cite{Wee11} are 
based on the number-theoretic assumptions.

\subsection{Experimental Analysis}
In order to evaluate the practical performance of our schemes, we implement
the TTDF in Section \ref{Threshold Trapdoor Function}, TPKE in Section \ref{Threshold Encryption from TTDF}
and RPKE in Section \ref{Revocation Encryption from TTDF} 
based on the NTL library.
The program is executed on an Intel Core i7-2600 CPU 3.4GHz and 4GB RAM running Linux Deepin 15.4.1 64-bit system.

\begin{trivlist}
\item \textbf{Experiment Setting and Computation Time.}
As depicted in Table \ref{table3} and Table \ref{table4}, we set
the security parameter $ \lambda=128,256,512$ respectively, and
the dimension of lattice
$d=512,768,1024$ respectively
and compute other parameter by the Section \ref{Instantiations of TLTDF},
where $h=d^{c_3}$, $c_3>1$, $p=h^{c_1}$, $c_1>0$, $l=\lfloor \lg p \rfloor$, $w=m=h/l$,
$\alpha \leq 1/(32ph)$ and $q>2\sqrt{d}/\alpha$. What's more, 
we set the number of users\footnote{
We use the technique of ``clearing out the denominator" to preserve correct decryption
and limit the number of extra errors in $(n!)^4$.
In order to run the LWE-based TPKE and RPKE practically, we set the number of users $n=4$.}
is $n=4$ and
the threshold value is $t=3$ in TPKE, and the number of revoked users is $r=2$ in RPKE.

As depicted in Table \ref{table3} and Table \ref{table4}, we show
the average running times of all algorithms in our TPKE and RPKE schemes.
For different security levels, we set the security parameter $\lambda=128,256,512$ respectively.
The average running times of all algorithms in both DDH-based TPKE and RPKE are the level of milliseconds.
Therefore, our schemes are efficient and practical.
Meanwhile, we set the dimension of lattice $d=512, 768, 1024$ respectively.
In LWE-based TPKE scheme, the average running times of the encryption algorithm and the decryption algorithm are
0.076s, 0.167s, 0.293s, and 0.005s, 0.012s, 0.021s.
In LWE-based RPKE scheme, the average running times of the encryption algorithm are
0.092s, 0.209s, 0.382s.
From these outcomes, we note that 
the encryption algorithm and the decryption algorithm of our TPKE schemes 
and the encryption algorithm of our RPKE schemes are efficient.
\end{trivlist}

\begin{table*}
\begin{center}
\caption{Experiment Setting and Computation Time of DDH-Based TPKE and RPKE}\label{table3}
\setlength{\tabcolsep}{2.5mm}{
\begin{tabular}{c|c|c|c|c|c|c|c|c|c|c}
\hline
\multicolumn{4}{c}{Parameter}& \multicolumn{4}{|c|}{TPKE Time (ms)} & \multicolumn{3}{c}{RPKE Time (ms)}\\
\hline
$\lambda$ & $n$ & $t$ & $r$ & KeyGen & Encrypt& Decrypt & Combine &
 KeyGen & Encrypt & Decrypt\\
\hline
128  & 4 & 3 & 2 & 2.184  & 0.263 & 0.145 & 0.309 & 4.978 & 0.390 & 0.467 \\
\hline
256 & 4 & 3 & 2 & 9.499  & 0.517 & 0.312 & 0.566 & 11.59 & 0.839 & 0.770 \\
\hline
512 & 4 & 3 & 2 & 68.08  & 1.311 & 0.797 & 1.318 & 34.53 & 1.890 & 1.786 \\
\hline
\end{tabular}}
\end{center}
$^\ddagger$ $\lambda$, $n$, $t$ and $r$ indicate the security parameter, 
the number of users, the threshold value, and the number of revoked users, respectively.
\end{table*}

\begin{table*}
\begin{center}
\caption{Experiment Setting and Computation Time of LWE-Based TPKE and RPKE}\label{table4}
\setlength{\tabcolsep}{1.0mm}{
\begin{tabular}{c|c|c|c|c|c|c|c|c|c|c|c|c|c}
\hline
\multicolumn{7}{c}{Parameter}& \multicolumn{4}{|c}{TPKE Time (s)} & \multicolumn{3}{|c}{RPKE Time (s)}\\
\hline
$d$ & $h$ &  $p$&  $w$ & $n$ & $t$ & $r$ & KeyGen & Encrypt& Decrypt & Combine &
KeyGen & Encrypt & Decrypt\\
\hline
512 &  2200 &  2063&  200  & 4 & 3 & 2 & 2.178 & 0.076 & 0.005 & 2.273 & 2.299 & 0.092 & 2.146\\
\hline
768 &  3260 &  6029&  280  & 4 & 3 & 2 & 4.451 & 0.167 & 0.012 & 8.949 & 4.215 & 0.209 & 9.032\\
\hline
1024 &  4420 &  9859&  340  & 4 & 3 & 2 & 7.724 & 0.293 & 0.021 & 17.554 & 7.367 & 0.382 & 17.962\\
\hline
\end{tabular}}
\end{center}
$^\ddagger$ $d$, $h$, $p$, $w$, $n$, $t$ and $r$ indicate the dimension of lattice,
the number of rows of matrix $\mathbf{A}$ of the public key, 
the size of the message space $\mathbb{Z}_p$, 
the number of columns of matrix $\mathbf{Z}$ of the master secret key, the number of users,
the threshold value, and the number of revoked users, respectively.
\end{table*}

\begin{trivlist}
\item \textbf{Acknowledgements.} This work is supported by the National Natural Science Foundation of China (Grant No. 61772522), Youth Innovation Promotion Association CAS, and Key Research Program of Frontier Sciences, CAS (Grant No. QYZDB-SSW-SYS035).
\end{trivlist}

\end{document}